\documentclass[aps,reprint,pra,nofootinbib,superscriptaddress,amsmath,amssymb]{revtex4-2}
\usepackage[utf8]{inputenc}
\usepackage{amsmath, amssymb, amsfonts,amsthm,amsopn}
\usepackage{dsfont}
\usepackage{graphicx}
\usepackage{thm-restate}
\usepackage{kbordermatrix}
\usepackage[normalem]{ulem}
\usepackage{array}
\usepackage{amssymb}

\usepackage{mathtools}
\usepackage{xcolor}

\usepackage[breaklinks]{hyperref}
\hypersetup{
    colorlinks,
    linkcolor={red!40!black},
    citecolor={blue!50!black},
    urlcolor={blue!20!black}
}

\DeclareMathOperator{\tr}{tr}
\DeclareMathOperator{\wt}{wt}

\DeclareMathOperator{\swap}{\textup{SWAP}}

\DeclareMathOperator{\diag}{diag}
\DeclareMathOperator{\supp}{supp}

\newcommand{\bra}[1]{\mathinner{\langle #1|}}
\newcommand{\ket}[1]{\mathinner{|#1\rangle}}
\newcommand{\braket}[2]{\mathinner{\langle #1|#2\rangle}}

\newcommand{\dyad}[1]{| #1\rangle \langle #1|}
\newcommand{\ot}[0]{\otimes}

\newcommand{\one}[0]{\mathds{1}}
\renewcommand{\a}{\alpha}
\renewcommand{\b}{\beta}

\newcommand{\N}{\mathds{N}}
\newcommand{\R}{\mathds{R}}
\newcommand{\C}{\mathds{C}}

\newcommand{\PP}{\mathcal{P}}

\newcommand{\bGamma}{\overbar{\Gamma}}
\newcommand{\eEight}{\Phi_{\textup{E8}}}

\newcommand{\av}[1]{\langle #1 \rangle}

\newtheorem{theorem}{Theorem}
\newtheorem*{theorem*}{Theorem}
\newtheorem{proposition}[theorem]{Proposition}

\newtheorem{corollary}[theorem]{Corollary}
\newtheorem{observation}[theorem]{Observation}

\newtheorem{remark}[theorem]{Remark}

\newtheorem*{problem*}{Problem}
\newtheorem*{question*}{Question}

\newtheorem*{result*}{Result}

\newcommand{\nn}{\nonumber}
\newcommand{\overbar}[1]{\mkern 1.5mu\overline{\mkern-1.5mu#1\mkern-1.5mu}\mkern 1.5mu}

\begin{document}

\title[
Combining moment matrices,
symmetric extension, and Lovász theta:\\
$\Phi_{\text{E8}}$ is entangled]{
Combining moment matrices,
symmetric extension, and Lovász theta:\\
$\Phi_{\text{E8}}$ is entangled}

\author{J\c{e}drzej Stempin}
\affiliation{
Division of Quantum Information,
Faculty of Mathematics, Physics and Informatics,
University of Gda{\'n}sk,
Wita Stwosza 57, 80-308 Gda{\'n}sk, Poland
}
\author{Gerard Anglès Munné}
\affiliation{
Institute of Theoretical Physics and Astrophysics, Faculty of Mathematics, Physics and Informatics,
University of Gda{\'n}sk,
Wita Stwosza 57, 80-308 Gda{\'n}sk, Poland
}

\author{Santiago Llorens}
\author{Felix Huber}
\affiliation{
Division of Quantum Information,
Faculty of Mathematics, Physics and Informatics,
University of Gda{\'n}sk,
Wita Stwosza 57, 80-308 Gda{\'n}sk, Poland
}
\email{felix.huber@ug.edu.pl}

\date{\today}

\begin{abstract}
We solve an open problem in entanglement theory posed by Yu et al., {\it Nature Communications 12, 1012 (2021)}.
The problem is to show,
via an entanglement witness,
that the $14$-qubit state $\Phi_{\text{E8}}$ is entangled.
Inspired by a method from quantum codes,
we combine symmetric extension
with moment matrices
to prove that $\eEight$ is entangled.
The proof has the form of a
rational infeasibility certificate
for a semidefinite program,
yielding an explicit entanglement witness.
Our approach unifies and extends several earlier methods
that involve the Lovász theta number of the Pauli anti-commutativity graph,
promising scalability and
flexibility in further applications.
\end{abstract}

\maketitle

\noindent{\bf Introduction. --- }
Determining whether a state is separable or entangled is a notoriously hard problem \cite{GURVITS2004448,Gharibian:2010:SNQ:2011350.2011361}.
At the same time, entanglement is widely thought to be a key ingredient in quantum computation~\cite{
10.1098/rspa.2002.1097,
PhysRevLett.131.030601,
PhysRevA.72.042316}.

In Ref.~\cite{Yu2021}, Yu et al. introduced the $14$-qubit state $\eEight$ \footnote{We denote this state by $\eEight$ as it appears in Eq.~(E8) in the arXiv version Ref.~\cite{Yu2021}, where it was first introduced. In the published version the state is in Eq. $(63)$.},
with the open task to prove
(via a witness)
that the state is entangled
across the bipartition
$A_1\dots A_7\,|\,B_1 \dots B_7$.
Curiously, it is known that
$\eEight$ {\it must} be entangled,
due to the fact that it serves as a proxy for a second (and equivalent)
problem in entanglement theory:
does there exist a pure seven-qubit state, such that all three-body marginals are maximally mixed?
Such states are known as {\it absolutely maximally entangled} (AME) \cite{PhysRevA.86.052335, PhysRevA.92.032316, helwig2013absolutelymaximallyentangledstates,
0034-4885-67-3-R03}.
For seven qubits~\cite{PhysRevLett.118.200502} no such state exists,
proving, albeit indirectly, entanglement in $\eEight$.

Indeed Yu et al. have shown that the problem of determining whether an AME state exists for a given number of qudits can be reformulated as a separability problem on two copies of the system \cite[Theorem 3]{Yu2021}:
a seven-qubit AME state exists if and only if the state $\eEight$ is separable, where~\cite[Supplemental Material, Eq.~(63)]{Yu2021},
\begin{equation}
  \eEight =  \sum_{i=0}^{3} x_{i}\,\mathcal{P}\big\{ P_{+}^{\otimes (7 - 2i)} \otimes P_{-}^{\otimes 2i} \big\}\,.
  \label{eq:phie8-state}
\end{equation}
Here $P_{\pm} = \frac{1}{2}(I \pm \text{SWAP}_{A_{i},B_{i}})$ denotes the projector onto the symmetric (+) and antisymmetric (-) subspaces of the qubit pair $A_{i}, B_{i}$, and $\mathcal{P}\{\cdot\}$ represents the sum over all permutations of the seven tensor factors that yield distinct terms. Here, we label the first set of seven qubits by $A_{1}, \ldots, A_{7}$ and the second set by $B_{1}, \ldots, B_{7}$. The coefficients are $x_{0} = \frac{113}{1119744}$, $x_{1} = \frac{17}{124416}$, $x_{2} = \frac{1}{13824}$, and $x_{3} = \frac{1}{1536}$.
The state is symmetric under the exchange of any $A_i$ and $B_i$  pair of tensor factors,
that is for all
$S\subseteq \{1,\ldots,7\}$ it holds that
$\pi_S \eEight\pi_S ^{\dag} = \eEight$
with
$\pi_S = \bigotimes_{i\in S}\text{SWAP}_{A_{i}B_{i}}$,
where $\text{SWAP}$ exchanges two tensor factors as
$\text{SWAP} \ket{v} \ot \ket{w} = \ket{w} \ot \ket{v}$~\footnote{We later also use $(ij)$ for the swap exchanging tensor factors $i,j$.}.
Also,
it is known that
for all $S\subseteq \{1,\dots, 7\}$,
\begin{equation}
\label{eq:SwapExp}
    \tr\big[\pi_S \eEight\big]
    =
    \frac{1}
    {\min\{2^{|S|}, 2^{7-|S|}
     \}}
     \,.
\end{equation}
In particular, %
$\tr\big[\pi_{\{1,\dots, 7\}}
\eEight\big]  = 1$.
Thus intuitively, $\eEight$ can be understood as the state obtained by twirling two copies of a (hypothetical) seven-qubit AME state under local unitaries.

\begin{figure}[tbp]
\includegraphics[width=0.98\columnwidth]{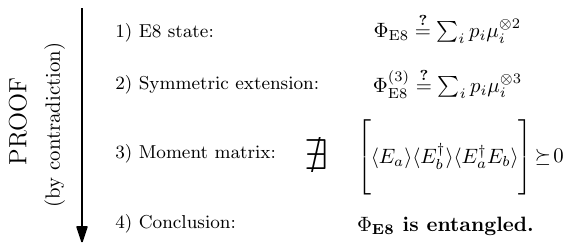}
\caption{
The putative separable state $\eEight$
allows for a symmetric three-copy extension
$\eEight^{(3)}$,
resulting in a highly constrained moment matrix that is nonlinear in expectations.
Via semidefinite programming (SDP) duality we show that no such moment matrix exists,
proving entanglement in $\eEight$.
}
\end{figure}

The open problem posed by Yu et al.~\cite{Yu2021} is to detect the entanglement of $\eEight$ directly, ideally via an entanglement witness\footnote{
Ref.~\cite{Yu2021}: "As no AME(7,2) state exists, the states
$\Phi_{AB}$
and
$\Phi^T_{AB}$
in Eqs. (E8, E9) are entangled, but we are not aware of any
operational entanglement criterion detecting them."}.
In particular, it is known that
it has a positive partial transpose,
and other entanglement criteria (see Ref.~\cite[Supplementary Note 5]{Yu2021})
have not detected entanglement
in this state.
In particular, the
symmetric extension hierarchy on density matrix level
with six copies does not detect its entanglement.

The aim of this paper is to prove entanglement in $\eEight$
by combining two  key numerical methods from quantum information:
the symmetric extension hierarchy by Doherty, Parrilo, and Spedalieri~\cite{PhysRevA.69.022308} for entanglement detection;
and the non-commutative optimization hierarchy
by Navascués, Pironio, and Acín \cite{navascues2007bounding, Navascues_2008},
traditionally reserved
for studying Bell nonlocality and ground state energies.

In our work we construct a symmetric extension of the putative separable state $\eEight$,
from which we derive a moment matrix.
A symmetry-reduced semidefinite program (SDP) allows to show
that no moment matrix satisfying all the constraints arising from $\eEight$ exists.
Via duality theory and an exact rational infeasibility certificate,
this provides a rigorous proof of entanglement in
$\eEight$.
This combination of methods --- symmetric extension, moment matrices, and symmetry-reduction ---
circumvents the prohibitive scaling inherent in symmetric extension based on density matrices alone.

The second contribution of this paper is to establish a link to coding bounds and graph theory.
That is,
we relate entanglement in $\eEight$ to the Lovász theta number,
a graph invariant
that was foundational in the study of semidefinite programs  for combinatorial optimization problems~\cite{
lovasz1979shannon,
Knuth_1994,
GALLI2017159}
and recently had a significant impact the study of
uncertainty relations~\cite{
PhysRevLett.132.200202,
PhysRevA.107.062211},
quantum correlations~\cite{
PhysRevA.107.062211},
stabilizer testing~\cite{10.1145/3717823.3718201},
ground state energies~\cite{hastings2023optimizing,
cbqf-d24r,
huber2025lovaszthetalowerbound},
and
quantum coding theory~\cite{6319408}.
Indeed our method is directly related to SDP bounds
on quantum codes developed in Refs.~\cite{
munne2024sdpboundsquantumcodes,
munne2026sdpboundsquantumcodesrational}.

\smallskip
\noindent {\bf Matrix inequality and proof sketch. --- }
We first require the following Lemma by Rains~\cite{959270}, which for the convenience of the reader we reproduce here.
For this, let $(12)$ (or $\swap$) be the swap operator for which
$(12) \ket{v} \ot \ket{w} = \ket{w} \ot \ket{v}$ for all $\ket{v}, \ket{w}\in~\C^d$.
Also denote by $(\cdot)^{T_{1}}$ the partial transposition on the first subsystem.

\begin{restatable}[Rains, Lemma 7.2~\cite{959270}]{lemma}{RainsLemma}
\label{Lem:Rains}

 Let $A$ be a complex square matrix
 of size $d \times d$. Then
 \begin{equation}\label{eq:AAtranspose}
     \big[(12) (A \ot A^\dag)\big]^{T_1}  \succeq 0\,,
 \end{equation}
 where $\succeq$ denotes positive semidefinite.
\end{restatable}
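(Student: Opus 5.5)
The plan is to compute the operator $\big[(12)(A\ot A^\dag)\big]^{T_1}$ explicitly in the computational basis and show that it is a rank-one projector-like operator $\ket{v}\bra{v}$ for a suitable vector $\ket{v}\in\C^d\ot\C^d$. Positive semidefiniteness then follows at once, since $\bra{w}\big(\ket{v}\bra{v}\big)\ket{w}=|\braket{v}{w}|^2\geq 0$ for every $\ket{w}$.

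\emph{Step 1: expand the product.} Write $A=\sum_{i,k}A_{ik}\ket{i}\bra{k}$, so that $A^\dag=\sum_{j,l}\overline{A_{lj}}\ket{j}\bra{l}$. Then
\begin{equation*}
A\ot A^\dag=\sum_{i,j,k,l}A_{ik}\,\overline{A_{lj}}\,\ket{i,j}\bra{k,l}\,.
\end{equation*}

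\emph{Step 2: apply the swap.} The operator $(12)$ acts on the left by exchanging the two ket labels, $\ket{i,j}\mapsto\ket{j,i}$. This gives
\begin{equation*}
(12)(A\ot A^\dag)=\sum_{i,j,k,l}A_{ik}\,\overline{A_{lj}}\,\ket{j,i}\bra{k,l}\,.
\end{equation*}

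\emph{Step 3: take the partial transpose.} The partial transpose on the first factor sends $\ket{j}\bra{k}\mapsto\ket{k}\bra{j}$ on that factor. Hence
\begin{equation*}
\big[(12)(A\ot A^\dag)\big]^{T_1}=\sum_{i,j,k,l}A_{ik}\,\overline{A_{lj}}\,\ket{k,i}\bra{j,l}\,.
\end{equation*}

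\emph{Step 4: recognise the rank-one form.} Define $\ket{v}=\sum_{i,k}A_{ik}\ket{k,i}$, which is the vectorisation of $A^T$. Its dual is $\bra{v}=\sum_{j,l}\overline{A_{lj}}\bra{j,l}$, after relabelling the summation indices. The product $\ket{v}\bra{v}$ then reproduces the last display term by term. This shows that $\big[(12)(A\ot A^\dag)\big]^{T_1}=\ket{v}\bra{v}\succeq 0$, which is \eqref{eq:AAtranspose}.

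The only step requiring care is the index bookkeeping in Steps 2 and 3. One must use the correct convention that the swap acts on the left and the transpose acts on the first tensor factor of both the ket and the bra. If the conventions were mismatched, one would obtain $\ket{v}\bra{v'}$ with $v\neq v'$, which is not positive in general. I would therefore double-check the result in the case $A=\one$: there $\ket{v}$ is the unnormalised maximally entangled vector $\sum_k\ket{k,k}$, and $(12)^{T_1}=\sum_{k,j}\ket{k,k}\bra{j,j}$, which matches the known identity that the partial transpose of the swap is $d$ times the maximally entangled projector.
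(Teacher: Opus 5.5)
Your proof is correct and is essentially the paper's argument written out in coordinates: your vector $\ket{v}=\sum_{i,k}A_{ik}\ket{k,i}$ is exactly $(I\ot A)\sum_k\ket{k,k}$, so $\ket{v}\bra{v}=(I\ot A)\,(12)^{T_1}\,(I\ot A^\dag)$. The paper reaches this same form without indices, using $(12)(A\ot I)=(I\ot A)(12)$, pulling operators on the second factor out of $T_1$, and then applying $(12)^{T_1}=d\dyad{\phi^+}\succeq 0$ together with congruence; your index bookkeeping, including the relabelling that identifies $\bra{v}$, checks out.
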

We provide a proof
in Appendix~\ref{app:aux},
and for now only note that
Eq.~\eqref{eq:AAtranspose}
can be
written as
$d(I \ot A) \dyad{\phi^+} (I \ot A^\dag) \succeq 0$,
where $\ket{\phi^+}$ is the Bell state of local dimension~$d$.

Let us shortly sketch the proof of our main result,
which is based on contradiction:
First,
symmetrically extend $\eEight$
to a three-partite state
$\eEight^{(3)}$.
Second, use Lemma~\ref{Lem:Rains}
to
map $\eEight^{(3)}$
to a positive semidefinite operator.
Third, derive a moment matrix $\Gamma$
with certain properties from it.
Fourth, show with a semidefinite program
that no such moment matrix exists.
This step requires to make the semidefinite program numerically approachable,  which we do via a symmetry reduction.
Despite being of numerical nature,
this last step will also be made rigorous
by the use of a solver with heuristic rounding to rational numbers.
This yields an exact rational infeasibility certificate,
proving entanglement in~$\eEight$.

\smallskip
\noindent{\bf A moment matrix witness. --- }
We now develop the main machinery to show entanglement in $\eEight$, a moment matrix witness based on symmetric extension.
For this, denote by $\mathcal{E}_n$ the n-qubit Pauli basis,
and let $E_0 \in \mathcal{E}_n$
be the identity matrix~$\one$. As we are concerned with
a bipartite state for which each party contains $7$ qubits,
set $N =  |\mathcal{E}_7 \,\backslash\, \one| = 4^7-1$.

\begin{observation}\label{obs:SDP}
If the state $\eEight$ is separable,
then the following semidefinite program (SDP) is feasible,
\begin{align}
\label{eq:SDP}
\text{find} \quad &\overbar{\Gamma} \succeq 0 \nn\\
\text{subject to} \quad
&\overbar{\Gamma}_{00} = 1\,,\nn\\
&\overbar{\Gamma}_{a0} = \overbar{\Gamma}_{aa}\,,\nn\\
&\overbar{\Gamma}_{ab} = 0 \quad\quad\,
\text{if} \quad\quad E_a^{\dag} E_b = -E_b E_a^{\dag}\,, \nn\\
&\overbar{\Gamma}_{aa} =
\tr\big( (E_a \otimes E_a^\dag)
   \eEight \big)\,,
\end{align}
with all relations to hold for $a,b=0,\dots, N$.
\end{observation}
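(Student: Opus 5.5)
The plan is to take a separable decomposition of $\eEight$, build a three-party extension from it, and define $\overbar{\Gamma}$ as a matrix of three-body correlators of that extension. Each constraint of the SDP~\eqref{eq:SDP} should then follow from a marginal condition, a symmetry, or Lemma~\ref{Lem:Rains}. I have not fully checked the vanishing constraint or the exact placement of the partial transposes.

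\emph{The extension.} Assume $\eEight=\sum_k p_k\,\sigma_k\otimes\tau_k$, with $\sigma_k$ on $A_1\dots A_7$ and $\tau_k$ on $B_1\dots B_7$. Since $\eEight$ is invariant under $\pi_{\{1,\dots,7\}}$, I may replace the decomposition by its symmetrization $\tfrac12\sum_k p_k(\sigma_k\otimes\tau_k+\tau_k\otimes\sigma_k)$. I then define the three-copy extension
\begin{equation*}
\eEight^{(3)}=\tfrac12\sum_k p_k\big(\sigma_k\otimes\tau_k\otimes\sigma_k+\tau_k\otimes\sigma_k\otimes\tau_k\big).
\end{equation*}
Its $(1,2)$ and $(2,3)$ marginals both equal $\eEight$, and it is invariant under exchanging copies $1$ and $3$. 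My candidate moment matrix is
\begin{equation*}
\overbar{\Gamma}_{ab}=\tr\big[(E_a\otimes E_a^{\dag}E_b\otimes E_b^{\dag})\,\eEight^{(3)}\big],
\end{equation*}
possibly with transposes inserted on the outer copies, as dictated by the next step.

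\emph{Checking the constraints.} Setting $a=b=0$ gives $\overbar{\Gamma}_{00}=1$. For $b=0$ the third factor is $\one$, so $\overbar{\Gamma}_{a0}=\tr[(E_a\otimes E_a^\dag)\eEight]$. For $a=b$ the middle factor is $\one$, and the $(1,3)$ marginal is $\sum_k p_k(\sigma_k\otimes\sigma_k+\tau_k\otimes\tau_k)/2$. To identify $\tr[(E_a\otimes E_a^\dag)\,\cdot\,]$ on this marginal with its value on $\eEight$, I would use the local-unitary twirl invariance of $\eEight$, which allows the decomposition to be twirled. This yields $\overbar{\Gamma}_{aa}=\overbar{\Gamma}_{a0}=\tr[(E_a\otimes E_a^\dag)\eEight]$.

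\emph{Positive semidefiniteness.} For each product term, $\sum_{a,b}\bar c_a c_b\overbar{\Gamma}_{ab}$ is a sum of expressions $\tr\big[(\sigma\otimes\tau\otimes\sigma)(X^{\dag}\text{-type}\otimes\cdots\otimes X)\big]$ with $X=\sum_b c_b E_b$. Rewriting the middle product $E_a^\dag E_b$ through the swap trick converts this into the form $\big[(12)(A\otimes A^\dag)\big]^{T_1}$ tested against a positive operator. Lemma~\ref{Lem:Rains}, equivalently its reformulation $d(I\otimes A)\dyad{\phi^+}(I\otimes A^\dag)\succeq0$, then gives nonnegativity term by term.

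\emph{Main obstacle: the vanishing constraint.} I expect the hardest step to be showing $\overbar{\Gamma}_{ab}=0$ whenever $E_a^\dag E_b=-E_bE_a^\dag$. My first attempt combines two facts:
\begin{enumerate}
\item The $1\leftrightarrow3$ copy symmetry, together with Hermiticity ($\overbar{\Gamma}_{ab}^{*}=\overbar{\Gamma}_{ba}$, using that Paulis are Hermitian), should relate $\overbar{\Gamma}_{ab}$ to a correlator whose middle operator is $E_bE_a^\dag$ rather than $E_a^\dag E_b$.
\item Anticommutation makes that middle operator equal to $-E_a^\dag E_b$.
\end{enumerate}
Together these would force $\overbar{\Gamma}_{ab}=-\overbar{\Gamma}_{ab}$. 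If reality fails, I would instead twirl the extension by the local Clifford group, under which $\eEight$ is invariant, so that the correlators become real. This step will likely dictate the precise transpose conventions fixed in the positivity step.
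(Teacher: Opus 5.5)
Your overall architecture (three-copy extension, a matrix of three-body correlators, Rains-type positivity) matches the paper's. Two steps do not go through as written, and the decisive one is the diagonal constraint, not the vanishing one. In your extension $\overbar{\Gamma}_{aa}$ is evaluated on the $(1,3)$ marginal $\tfrac12\sum_k p_k(\sigma_k\otimes\sigma_k+\tau_k\otimes\tau_k)$. Write $x_k=\tr(E_a\sigma_k)$ and $y_k=\tr(E_a\tau_k)$. Then $\overbar{\Gamma}_{aa}-\overbar{\Gamma}_{a0}=\tfrac12\sum_k p_k|x_k-y_k|^2\ge 0$, and this vanishes for all $a$ only if $\sigma_k=\tau_k$ whenever $p_k>0$. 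Twirling the decomposition by local unitaries only averages these nonnegative terms, so it cannot produce equality. Invariance of $\eEight$ under conjugation by $\pi_{\{1,\dots,7\}}$ is also too weak to force it. The missing ingredient, which the paper uses at the very start, is $\tr(\pi_{\{1,\dots,7\}}\eEight)=1$ [Eq.~\eqref{eq:SwapExp}]. Since $\tr(\swap\,\sigma_k\otimes\tau_k)=\tr(\sigma_k\tau_k)\le 1$, with equality iff $\sigma_k=\tau_k$ is pure, every term of any separable decomposition already has this form. Hence $\eEight=\sum_i p_i\mu_i\otimes\mu_i$ with pure $\mu_i$, and the extension is simply $\sum_i p_i\mu_i^{\otimes 3}$, all of whose two-body marginals equal $\eEight$. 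Then $\overbar{\Gamma}_{aa}=\sum_i p_i|\tr(E_a\mu_i)|^2=\overbar{\Gamma}_{a0}=\tr[(E_a\otimes E_a^\dag)\eEight]$ is immediate, and your symmetrization step becomes unnecessary.

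For the vanishing constraint, your correlator matrix $\Gamma$ is Hermitian but not real. Pauli strings are Hermitian, so $O=E_a\otimes E_a^\dag E_b\otimes E_b^\dag$ satisfies $O^\dag=-O$ whenever $E_a^\dag E_b=-E_bE_a^\dag$. This gives $\Gamma_{ab}^*=-\Gamma_{ab}$: the entry is purely imaginary, not zero. Your fallback of twirling by local Cliffords does not fix this. At each site they act by orientation-preserving signed permutations of $X,Y,Z$ and preserve relations $E_aE_b=\pm iE_c$, so the twirled entries stay purely imaginary and nothing forces them to cancel. The symmetry that would cancel them is the anti-unitary conjugation $\mu_i\mapsto\bar\mu_i$, which is allowed because $\eEight$ is real. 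The paper simply passes to $\overbar{\Gamma}=(\Gamma+\Gamma^T)/2=\mathrm{Re}\,\Gamma$. Since $\Gamma^T=\Gamma^*\succeq 0$, this stays positive semidefinite, the real-valued constraints on the $00$, $a0$, $aa$ entries are unchanged, and the anticommuting entries become exactly $0$. Your positivity step is fine and needs no transposes in the final correlator. For each product term, $\sum_{a,b}\bar c_a c_b\Gamma_{ab}=\tr(X\tau X^\dag)\ge 0$ with $X=\sum_b c_b\tr(E_b^\dag\sigma)E_b$, which is what Lemma~\ref{Lem:Rains} encodes in the paper's $\Psi$-construction.
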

\begin{proof}
Suppose the state $\eEight$ is separable.
Due to $\tr(\swap_{AB} \eEight)=1$ [Eq.~\eqref{eq:SwapExp}],
it can be written
as a symmetric\footnote{That the decomposition in Eq.~\eqref{eq:phi_sep} can be chosen symmetric follows
from the fact that $\tr(\pi \eEight)=1$~\cite{Yu2021}: suppose one writes the more general decomposition
$$
 \eEight = \sum_i p_i \dyad{\phi_i} \ot \dyad{\psi_i}\,.
$$
By $\tr(\pi \eEight)=1$
it must hold that
$|\!\braket{\psi_i}{\phi_i}\!|^2 = 1$, making
$\ket{\psi_i}, \ket{\phi_i}$
equal up to a complex phase.}
convex combination~\cite{Yu2021},
\begin{equation}
\label{eq:phi_sep}
    \eEight = \sum_{i}
    p_{i} \mu_i
    \otimes \mu_i\,,
\end{equation}
where $p_{i} \geq 0$ and $\sum_{i}p_{i}=1$.
As a consequence, there also exists a three-partite symmetric extension,
\begin{equation}
\label{eq:sym_ext}
\eEight^{(3)} =
\sum_i p_i \mu_i \otimes \mu_i \otimes \mu_i\,.
\end{equation}

From $\eEight^{(3)}$ we now construct a moment matrix.
As a first step, note that by Lemma~\ref{Lem:Rains}
the following operator is positive semidefinite,
\begin{equation}
    \label{eq:PsiDefinition}
    \Psi = \big((13)\eEight^{(3)}\big)^{T_{1}}\,.
\end{equation}
Define the set of operators
\begin{equation}\label{eq:R_a_ops}
    R_{a} = \frac{1}{2}(13)^{T_{1}}(I \ot E_{a} \ot E_{a}^{\dag}), \quad E_{a} \in \mathcal{E}_{7}\,,
\end{equation}
where $\mathcal{E}_{7}$ is the set of all Pauli strings of length $7$.
 Then construct a moment matrix $\Gamma$ with entries\footnote{Note that this construction was already given in the PhD thesis of one of the authors \cite[Eq. $(3.104)$]{munne2025existence},
up to misplaced daggers.}
\begin{equation}\label{eq:Gamma_first_def}
    \Gamma_{ab} = \tr(R_{a}^{\dag}R_{b}\Psi)\,.
\end{equation}
Note that the matrix $\Gamma$
is positive semidefinite,
\begin{equation}
\label{eq:Gamma_pos}
\Gamma \succeq 0\,,
\end{equation}
and has entries,
\begin{equation}\label{eq:Gamma_entries}
 \Gamma_{ab} = \sum_i p_i  \tr(E_{b}^{\dag} \mu_{i})\tr( E_{a}\mu_{i})\tr( E_{a}^{\dag}E_{b} \mu_{i}) \,.
\end{equation}
We derive Eqs.~\eqref{eq:Gamma_pos} and \eqref{eq:Gamma_entries} more carefully in
Appendix~\ref{app:gamma}.
In particular, from Eq.~\eqref{eq:Gamma_entries}
it follows that the diagonal of $\Gamma$ is determined by the correlations of $\eEight$,
\begin{align}\label{eq:Gamma_correl}
 \Gamma_{aa} &= \sum_i p_i
 \tr( E_a \mu_{i}) \tr( E_a^{\dag} \mu_{i})
 \nn\\
 &= \tr\big((E_a \ot E_a^{\dag}) \eEight\big) \,,
\end{align}
whereas the off-diagonal terms originate in correlations of the symmetric extension $\eEight^{(3)}$.
Additionally, from  Eq.~\eqref{eq:Gamma_entries} it follows due to $E_0 = \one$ that
\begin{align}\label{eq:Gamma_norm}
  \Gamma_{00} &= 1\,,\quad\quad
  \Gamma_{aa} = \Gamma_{a0}\,.
\end{align}
Finally, we restrict ourselves to the real part
 $\overbar{\Gamma} =  \big(\Gamma + \Gamma^T \big)/2$\,,
which satisfies the previous relations
Eqs.~\eqref{eq:Gamma_pos},
\eqref{eq:Gamma_correl},
\eqref{eq:Gamma_norm},
and also the following,
\begin{align}\label{eq:Gamma_acomm}
\overbar{\Gamma}_{ab} &= 0\quad\quad \text{if }\quad E_a^{\dag} E_b = -E_b E_a^{\dag}\,.
\end{align}

Collecting the constraints of Eqs.~\eqref{eq:Gamma_pos},
\eqref{eq:Gamma_correl}-\eqref{eq:Gamma_acomm}
yields the SDP
of Eq.~\eqref{eq:SDP}.
This ends the proof.
\end{proof}

We mention that the SDP in Eq.~\eqref{eq:SDP}
is a variant of
~\cite[Eq.~(145)]{munne2024sdpboundsquantumcodes}. We explain the relation in Appendix~\ref{app:relation}.

The values of
$\tr\big((E_a \ot E_a^\dagger)\eEight \big)$
for $E \in \mathcal{E}_7$ that are used as constraints
in Eq.~\eqref{eq:SDP}
are derived in Appendix~\ref{app:correlations} using the decomposition of the swap,
\begin{equation}\label{eq:swap_expansion}
    \swap = \frac{1}{2^{7}}
    \sum_{a=0}^N
    E_a\otimes E_a^\dag\,.
\end{equation}

While the derivation of $\Gamma$ through
Eqs.~\eqref{eq:sym_ext} - \eqref{eq:Gamma_first_def} makes
the connection to symmetric extension
via $\eEight^{(3)}$ clear,
the role of Lemma~\ref{Lem:Rains} appears to be rather obscure.
A more intuitive approach is to
understand $\Gamma$ as a convex combination of moment matrices of
one-partite states.
Suppose that there is only a single $\mu_i = \mu$ in the convex combination of Eq.~\eqref{eq:phi_sep},
and consider a set of observables of the form
$\mathfrak{E}_a = \langle E_a^\dag \rangle E_a$ where $\langle E_a \rangle = \tr(E_a\mu)$.
Then $\Gamma$ of Eq.~\eqref{eq:Gamma_first_def} reads
\begin{equation}\label{eq:Gamma_compact}
    \Gamma_{ab} = \tr(
{\mathfrak{E}_a}^\dag \mathfrak{E}_b \mu)\,.
\end{equation}
Thus $\Gamma$ has the form,
\begin{align}
	\kbordermatrix{
		& \one    & \av{E_1} E_1^\dagger & \cdots    & \av{E_{N}} E_{N}^\dagger \\
		\one &  1   & \Gamma_{01}           &\cdots     & \Gamma_{0 N} \\
		\av{E_1^\dagger} E_1 & & \Gamma_{11}    & \hdots   & \Gamma_{1N}\\
		\vdots &       &   & \ddots    & \vdots\\
		\av{E_{N}^\dagger} E_{N}& &           &           & \Gamma_{NN}
	} \succeq 0 \,,
	\label{eq:lovasz-moments}
\end{align}
with $\Gamma_{aa} = \Gamma_{a0} = \av{E_a^\dag}\av{E_a}$
and
$\Gamma_{00} = 1$
due to $E_0 = \one$,
from which the properties in
Eqs.~\eqref{eq:Gamma_correl}, \eqref{eq:Gamma_norm}
are easy to see
(also see Appendix~\ref{app:MomentMatrixFormulation}).
Taking the real part leads to
$\overbar{\Gamma}$.
For an arbitrary convex combination,
the resulting $\overbar{\Gamma}$ then is a convex combination of moment matrices,
$\overbar{\Gamma}=\sum_{i}p_i\overbar{\Gamma}_i$,
where $\overbar{\Gamma}_i$ is the moment matrix of Eq.~\eqref{eq:lovasz-moments} corresponding to a single state
$\mu_i$ in Eq.~\eqref{eq:phi_sep}.

\smallskip
\noindent{\bf The state $\eEight$ is entangled. --- }
We show that the SDP in Eq.~\eqref{eq:SDP} has no solution, thereby proving that $\eEight$ is entangled.
The challenge is that its matrix variable
has size $O(4^n) \times O(4^n)$,
and solving such SDP for $n=7$ is numerically prohibitive.

We address this challenge by formulating a {\it relaxation}
of the program in Eq.~\eqref{eq:SDP}
that
averages $\bGamma$
over a set of permutations
that
keep
weights and relative
"distances" of any triple
$E_a$,
$E_b$, and $E_a^\dag E_b$
invariant.
This yields a matrix $\widetilde \Gamma$.
This averaging procedure
induces symmetries
and
we can block-diagonalize (or {\it symmetry-reduce})
$\widetilde \Gamma$.
This is done through the Terwilliger algebra~\cite{bannai_algebraic_2021,1468304}.
Conceptually, because
the averaged (and later symmetry-reduced)
SDP is a relaxation,
infeasibility of this latter SDP
signals infeasibility of the original SDP in Eq.~\eqref{eq:SDP}.
The computations are rather tedious and we refer to Appendix~\ref{app:inf_cert}.

The formal proof of infeasibility is
then obtained
via duality theory of semidefinite programming~\cite{Vandenberghe1996}.
The idea is that any SDP in standard form,
\begin{equation}
\label{eq:PrimalSDPdef}
    \min_{X \in \R^{n \times n}}
    \Big\{\tr(CX) : \tr(A_i X) = b_i,\,
    i=1,\dots,m\,,\, X\succeq 0\Big\}\,,
\end{equation}
where $C \in \R^{n\times n}$,
has a dual SDP of the form,
\begin{equation}
\label{eq:DualSDPdef}
    \max_{y \in \R^m} \,\,
    \Big\{b^T y \,:\, \sum_{i=1}^m y_i A_i \preceq C\Big\}\,.
\end{equation}
By weak duality,
$\tr(CX) \geq b^T y$ holds for all pairs of feasible points $(X,y)$.
In particular, if the primal SDP is a feasibility problem $(C=0)$,
then
any solution of the dual SDP with positive objective value proves infeasibility of the primal SDP. %
We have that:

\begin{restatable}[]{observation}{obsSDPinfeas}
\label{obs:SDP_infeas}
The SDP in Eq.~\eqref{eq:SDP} is infeasible.
\end{restatable}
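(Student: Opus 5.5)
The plan is to prove infeasibility of the SDP in Eq.~\eqref{eq:SDP} by exhibiting a dual certificate, after first reducing the problem to a tractable size by symmetry. The constraints of Eq.~\eqref{eq:SDP} depend only on invariant data:
\begin{itemize}
\item the diagonal values $\tr((E_a\ot E_a^\dag)\eEight)$ depend only on the weight $\wt(E_a)$, since $\eEight$ is invariant under local unitaries $U^{\ot 2}$ on each pair and under permutations of the seven pairs;
\item the anticommutation pattern $E_a^\dag E_b=-E_bE_a^\dag$ is invariant under the group $G$ generated by qubit permutations and local Clifford relabelings of $\{X,Y,Z\}$ on each site.
\end{itemize}
Hence, if $\overbar{\Gamma}$ is feasible, so is its $G$-average $\widetilde\Gamma=\frac{1}{|G|}\sum_{g}g\overbar{\Gamma}g^{T}$. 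This is because all constraints are linear and $G$-invariant, and conjugation by permutation matrices preserves positivity.

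The averaged $\widetilde\Gamma$ lies in the commutant of $G$ acting on $\R^{4^7}$. This is the Terwilliger algebra of the quaternary Hamming scheme, with the refinement that remembers commutation: entries are indexed by orbits of pairs $(E_a,E_b)$, labelled by the numbers of sites where the supports overlap with equal or with different nonidentity Paulis, together with the individual weights. Using the known block diagonalization of this algebra (Schrijver / Gijswijt--Mittelmann--Schrijver style), $\widetilde\Gamma\succeq 0$ becomes a small family of PSD conditions. These are blocks of size at most about $8\times 8$, whose entries are linear in a few hundred orbit variables $x_{i,j,t,p}$.

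In these variables the constraints read as follows:
\begin{itemize}
\item $x$ for the trivial orbit equals $1$;
\item the variables with $E_b=\one$ equal the diagonal ones, giving $\Gamma_{a0}=\Gamma_{aa}$;
\item the diagonal ones are fixed by Appendix~\ref{app:correlations}, computed via Eq.~\eqref{eq:swap_expansion} and Eq.~\eqref{eq:SwapExp} through a Möbius/Krawtchouk inversion from swap expectations to weight correlations;
\item the variables on anticommuting orbits, those with odd count of differing overlaps, are zero.
\end{itemize}
Since the reduced SDP is a relaxation, its infeasibility implies Observation~\ref{obs:SDP_infeas}.

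Writing the reduced problem in the primal form Eq.~\eqref{eq:PrimalSDPdef} with $C=0$, I will solve the dual Eq.~\eqref{eq:DualSDPdef} numerically with high precision and find $y$ with $b^Ty>0$ and $\sum_i y_iA_i\preceq 0$. I will then round $y$ to rationals, keeping a slack by maximizing $b^Ty$ under the normalization $-\sum y_iA_i\succeq \epsilon I$. Finally, I will verify exactly, in rational arithmetic, that each block of $-\sum_i y_iA_i$ is PSD, via an exact $LDL^T$ factorization or Sylvester-type checks, and that $b^Ty>0$. By weak duality this certifies that no primal $X$ exists.

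The main obstacle is twofold. First, the orbit bookkeeping and block-diagonalization formulas must be correct, since errors there invalidate the relaxation claim. I would cross-check them numerically against the unreduced SDP for smaller $n$. Second, the margin may be tiny, so rounding could destroy dual feasibility. There I would first try projecting the rounded $y$ onto the affine equality structure and recentring toward the analytic center before rounding.
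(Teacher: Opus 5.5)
Your proposal matches the paper's proof essentially step for step: average $\overbar{\Gamma}$ over $S_3\wr S_7$ to obtain a Terwilliger-algebra element with orbit variables $x^{t,p}_{i,j}$ (anticommuting orbits being those with $t-p$ odd), block-diagonalize it, pass to the dual, and round a numerical dual solution to an exact certificate with positive objective. One detail to adjust is that the coefficients $\alpha(i,j,t,p,a,k)$ carry factors $3^{\frac12(i+j)-t}$, so exact verification naturally takes place in $\mathbb{Q}(\sqrt{3})$, which is where the paper's certificate lives, unless you first conjugate each block by $\diag(3^{i/2})$ to make the data rational.
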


The proof is provided as an
exact rational
infeasibility certificate
obtained by rounding a numerical solution,
and is provided as a set of positive semidefinite matrices in Appendix~\ref{app:inf_cert}.
This yields the main result of this paper:
\begin{corollary}
The state $\eEight$ is entangled.
\end{corollary}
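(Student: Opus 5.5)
The corollary follows by contraposition from Observation~\ref{obs:SDP} and Observation~\ref{obs:SDP_infeas}. Observation~\ref{obs:SDP} says that separability of $\eEight$ implies feasibility of the SDP in Eq.~\eqref{eq:SDP}. Observation~\ref{obs:SDP_infeas} says that SDP is infeasible. Hence $\eEight$ cannot be separable. The corollary itself is therefore immediate. The real work is in supporting the two observations, and I would organize it as follows.

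\textbf{Separability implies feasibility.} First I would check carefully the symmetric decomposition Eq.~\eqref{eq:phi_sep}. Starting from a general separable decomposition with product terms, I would use $\tr(\pi_{\{1,\dots,7\}}\eEight)=1$ from Eq.~\eqref{eq:SwapExp}. This forces every pure-product component $\ket{\phi_i}\ot\ket{\psi_i}$ to satisfy $|\braket{\phi_i}{\psi_i}|^2=1$, so each term has the form $\mu_i\ot\mu_i$. This yields the three-copy extension Eq.~\eqref{eq:sym_ext}.

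Next I would verify positivity of $\Psi$ term by term via Lemma~\ref{Lem:Rains}. Applying the lemma with $A=\mu_i\ot\mu_i$ viewed across the cut $1\,|\,3$, together with the tensor factor $\mu_i\succeq 0$ on system $2$, shows each term is positive semidefinite. I would then confirm the entry formula Eq.~\eqref{eq:Gamma_entries}. From it follow the constraints $\bGamma_{00}=1$, $\bGamma_{a0}=\bGamma_{aa}$, and the diagonal values $\tr((E_a\ot E_a^\dag)\eEight)$. These diagonal values are computed in closed form using the swap expansion Eq.~\eqref{eq:swap_expansion} and Eq.~\eqref{eq:SwapExp}; by the local-unitary twirl they depend only on the weight of $E_a$.

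For the anticommutation zeros, suppose $E_a^\dag E_b=-E_bE_a^\dag$. Then the single-$\mu$ entry in Eq.~\eqref{eq:Gamma_compact} is a product of expectations times $\tr(E_a^\dag E_b\mu)$, and this third factor is purely imaginary relative to the prefactor. So its real part vanishes, and $\bGamma_{ab}=0$ follows after symmetrization.

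\textbf{Infeasibility.} To prove Observation~\ref{obs:SDP_infeas}, I would first average any feasible $\bGamma$ over the group of qubit permutations combined with local Pauli relabelings. All constraints are invariant under this group, so the averaged matrix $\widetilde\Gamma$ is still feasible. Moreover, $\widetilde\Gamma$ lies in the Terwilliger algebra of the quaternary Hamming scheme: its entries depend only on the weights of $E_a$, $E_b$, $E_a^\dag E_b$ and their overlap pattern. Block-diagonalizing this algebra reduces the $4^7\times 4^7$ constraint to a polynomial number of small positive semidefinite blocks in few variables.

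I would then solve the dual of this reduced problem numerically and round the solution to rationals. Finally I would verify exactly that the dual slack matrices are positive semidefinite, for instance via an exact $LDL^T$ factorization, and that the dual objective $b^Ty$ is strictly positive. By weak duality this certifies that the primal is infeasible.

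\textbf{Main obstacle.} The hard step is producing an exact rational certificate: the numerical dual must have enough slack that rounding preserves positive semidefiniteness. If rounding breaks it, I would first try re-solving with a small margin $\sum_i y_iA_i\preceq C-\epsilon I$ and projecting onto the affine constraints before checking.
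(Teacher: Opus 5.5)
Your proposal is correct and follows the paper's route: the corollary is the contrapositive combination of Observation~\ref{obs:SDP} and Observation~\ref{obs:SDP_infeas}, and your sketches of both observations match the paper (symmetric three-copy extension via $\tr(\pi_{\{1,\dots,7\}}\eEight)=1$, Rains' lemma, the real part of the moment matrix killing anticommuting entries, $S_3\wr S_n$ averaging with Terwilliger block-diagonalization, and an exactly verified dual certificate with positive objective). Two minor corrections: Rains' lemma should be applied with $A=\mu_i$, so that $A\otimes A^\dagger=\mu_i\otimes\mu_i$ on systems $1,3$, not with $A=\mu_i\otimes\mu_i$; and the paper's exact certificate lives in $\mathbb{Q}(\sqrt{3})$ rather than $\mathbb{Q}$, reflecting the factors $3^{(i+j)/2-t}$ in the block coefficients $\alpha(i,j,t,p,a,k)$.
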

\begin{proof}
Combine Observations~\ref{obs:SDP} and~\ref{obs:SDP_infeas}.
\end{proof}

\noindent{\bf Lovász theta body. --- }
We now make connections to graph theory
and show that the entanglement of $\eEight$
can also be proven via a graph invariant.
The first four lines of the SDP in Eq.~\eqref{eq:SDP}
are also known in the context
of graph theory.
For a graph $G$ with edges denoted by $a\sim b$,
the Lovász theta number is defined as~\cite{lovasz1979shannon}\footnote{
Many other formulations of the Lovász number are known~\cite{
porumbel2022demystifyingcharacterizationsdpmatrices}.
},
\begin{align}\label{eq:lovasz}
\vartheta(G) = \quad \max \quad   &\sum_{a=1}^{N} M_{aa}\nn\\
\text{subject to} \quad
& M_{aa} = x_a
\quad \quad \forall a \in G\,, \nn\\
& M_{ab} = 0 \quad \quad\,\,\,\, \text{if}
\quad a\sim b\,, \nn\\
&\begin{pmatrix}
1 & x^T \\
x & M
\end{pmatrix}
\succeq 0\,.
\end{align}
The significance of the Lovász theta number comes from the fact that
it bounds the independence number $\alpha(G)$
and the chromatic number $\chi(\overbar{G})$~\cite{
Knuth_1994},
both quantities that are NP-hard to compute, as
\begin{equation}
\alpha(G) \leq \vartheta(G) \leq \chi(\overbar{G})\,,
\end{equation}
with $\overbar{G}$ the complement graph.
Also define the Lovász theta body as
\begin{align}
    \text{TH}(G) &=
    \Big\{
    \text{diag}(M) \,\Big|\,
    \begin{pmatrix}
        1 & x^T \\
        x & M
    \end{pmatrix} \succeq 0\,,\\
    &\quad\quad\,\,
    M_{aa} = x_a \,\,\forall a \in G\,,\,
    M_{ab} = 0 \,\,\text{ if }\, a \sim b
    \Big\}\,.\nn
\end{align}
Clearly
\begin{equation}
    \vartheta(G) = \max \big\{ \sum_{a \in G} \Delta_a \,|\, \Delta \in \textup{TH}(G)\big\}\,.
\end{equation}

Now let $\{E_a\}_{a=1}^m$ be a a set of Pauli strings.
Define its corresponding
anti-commutativity graph $G$
as the graph with
$m$ vertices,
where two vertices $a,b$ are connected,
written $a\sim b$,
if $E_a^{\dag} E_b = -E_b E_a^{\dag}$. Naturally, this encodes the third constraint of the SDP in Eq.~\eqref{eq:SDP} as a graph.

Consider the anti-commutativity graph
$G_{4+}$ of all Pauli strings $E_a \in \mathcal{E}_7$
of weight
{\it larger or equal than four},
where the weight of an element
$E_a$ is the number of its nontrivial tensor factors.
Therefore the entanglement of $\eEight$
can be understood as the fact that there is no point in
the convex body $\text{TH}(G_{4+})$
for which
$1 + \sum_{a\in G_{4+}} M_{aa} = 2^7$
holds.
\begin{observation}\label{obs:lovasz_infeasible}
   If $\eEight$ is separable, then
   there exists a point in $\textup{TH}(G_{4+})$
   with $1+ \sum_{a\in G_{4+}} M_{aa} = 2^7$.
\end{observation}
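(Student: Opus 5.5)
We start from Observation~\ref{obs:SDP}. If $\eEight$ is separable, there is a matrix $\overbar{\Gamma}\succeq 0$, indexed by all of $\mathcal{E}_7$, that satisfies the constraints of Eq.~\eqref{eq:SDP}. The plan is to restrict $\overbar{\Gamma}$ to the rows and columns labelled by $E_0=\one$ and by the Pauli strings of weight at least four. A principal submatrix of a positive semidefinite matrix is positive semidefinite. Write this submatrix as
\[
\begin{pmatrix} 1 & x^T\\ x & M\end{pmatrix},
\qquad x_a=\overbar{\Gamma}_{a0},\quad M_{ab}=\overbar{\Gamma}_{ab},
\]
with $a,b$ running over the vertices of $G_{4+}$. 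Then $\overbar{\Gamma}_{00}=1$ gives the top-left entry, and $\overbar{\Gamma}_{a0}=\overbar{\Gamma}_{aa}$ gives $M_{aa}=x_a$. The anticommutation constraint $\overbar{\Gamma}_{ab}=0$ whenever $E_a^\dag E_b=-E_bE_a^\dag$ is exactly $M_{ab}=0$ for $a\sim b$ in $G_{4+}$. Hence $\mathrm{diag}(M)\in\mathrm{TH}(G_{4+})$.

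It remains to compute $1+\sum_{a\in G_{4+}}M_{aa}$. By the last constraint of Eq.~\eqref{eq:SDP}, $M_{aa}=\tr\big((E_a\ot E_a^\dag)\eEight\big)$. I would first show that these correlations vanish for every Pauli string of weight $1\le \wt(E_a)\le 3$.

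Set $\pi_S=\bigotimes_{i\in S}\swap_{A_iB_i}$ for $|S|\le 3$. Applying the expansion of Eq.~\eqref{eq:swap_expansion} on each pair $A_iB_i$, $i\in S$, gives
\[
\tr(\pi_S\eEight)=2^{-|S|}\sum_{\supp(E)\subseteq S}\tr\big((E\ot E^\dag)\eEight\big).
\]
Eq.~\eqref{eq:SwapExp} gives $\tr(\pi_S\eEight)=2^{-|S|}$. Therefore, for every $S$ with $|S|\le 3$, the sum of the correlations over strings supported in $S$ equals $1$. The identity string contributes exactly $1$, because $\eEight$ has unit trace. Möbius inversion over subsets then shows that $\sum_{\supp(E)=T}\tr\big((E\ot E^\dag)\eEight\big)=0$ for every $T$ with $1\le|T|\le 3$.

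This only controls sums, and it must be upgraded to each individual term. For that I would use that every term is nonnegative. By Eq.~\eqref{eq:Gamma_correl}, each correlation equals $\sum_i p_i|\tr(E_a\mu_i)|^2\ge0$, where the $\mu_i$ come from the separable decomposition Eq.~\eqref{eq:phi_sep}; alternatively, it is a diagonal entry of the PSD matrix $\overbar{\Gamma}$. A sum of nonnegative terms that equals zero forces each term to vanish. So $\tr\big((E_a\ot E_a^\dag)\eEight\big)=0$ for all $1\le\wt(E_a)\le3$.

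Finally, take the full swap $S=\{1,\dots,7\}$. Eq.~\eqref{eq:SwapExp} gives $\tr(\pi_{\{1,\dots,7\}}\eEight)=1$, and together with Eq.~\eqref{eq:swap_expansion} this yields
\[
\sum_{a=0}^N\tr\big((E_a\ot E_a^\dag)\eEight\big)=2^7 .
\]
Dropping the weight-$1$ to $3$ terms, which all vanish, and keeping the identity term, which equals $1$, leaves $1+\sum_{a\in G_{4+}}M_{aa}=2^7$. This is the required point of $\mathrm{TH}(G_{4+})$.

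The main obstacle is this middle step: passing from the subset-sum identities of Eq.~\eqref{eq:SwapExp} to the vanishing of each individual low-weight correlation. It requires checking both the Möbius inversion and the nonnegativity carefully. If preferred, one can instead cite the explicit correlation values computed in Appendix~\ref{app:correlations}.
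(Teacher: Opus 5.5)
Your proof is correct and follows the paper's route. You take $\bGamma$ from Observation~\ref{obs:SDP}, use that the weight-$1$ to $3$ correlations vanish, restrict to the rows and columns of the identity and the weight-$\geq 4$ strings, and use the full swap to get the total $2^7$. The only difference is that you derive the vanishing of the low-weight correlations yourself, by inclusion--exclusion over subsets plus nonnegativity of the diagonal entries, and take the principal submatrix directly rather than invoking the $2\times 2$ minors; the paper instead cites the exact values from Appendix~\ref{app:correlations}, which rely on the local-unitary and permutation symmetry of $\eEight$.
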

\begin{proof}
Suppose $\eEight$ is separable.
By Observation~\ref{obs:SDP}
the SDP in Eq.~\eqref{eq:SDP} is feasible.
The state $\eEight$ has vanishing expectation values
on all non-trivial Pauli operators of weight smaller or equal to three (c.f. Appendix \ref{app:correlations}),
\begin{equation}
 \tr\big((E_a \ot E_a^{\dag}) \eEight\big) = 0\,, \quad \text{if}
 \quad
 0 < \wt(E_a) < 4\,.
 \label{eq:0-diagonal-terms}
\end{equation}
Thus $\bGamma_{aa}=0$ must hold in
Eq.~\eqref{eq:SDP} for all $a$ with
$0 < \wt(E_a) < 4$.
As $\bGamma$ must be positive semidefinite,
so must all of its $2\times 2$ minors.
As a consequence, if a diagonal entry
$\bGamma_{aa}$ vanishes,
then so does the whole corresponding row and column,
i.e., $\bGamma_{ab} = \bGamma_{ba} = 0
$ for all $b= 0,\dots, N$.
As a consequence, we can truncate the matrix
$\bGamma$ without changing the objective value.
In terms of the anticommutativity graph of
$\mathcal{E}_7$,
this corresponds to
removing all vertices $a$
for which $0 < \wt(E_a) < 4$,
leading to the graph $G_{4+}$.
Furthermore by Eqs.~\eqref{eq:SwapExp} and~\eqref{eq:swap_expansion},
\begin{equation}
  \sum_{a=0}^N \tr\big((E_a \ot E_a^{\dagger})
  \eEight \big)
  = 2^7\,.
\end{equation}

We can thus relax the SDP in Eq.~\eqref{eq:SDP} to
\begin{align}\label{eq:SDP_Lovasz_feas}
\text{find} \quad   & M \nn\\
\text{subject to} \quad
& M_{aa} = x_a \quad \quad \forall a\in G_{4+}\,,\nn\\
& M_{ab} = 0 \quad \quad\,\,\,\, \text{if}
\quad a\sim b\,, \nn\\
&\begin{pmatrix}
1 & x^T \\
x & M
\end{pmatrix}
\succeq 0\,,\nn\\
&1 + \sum_{a\in G_{4+}} M_{aa} = 2^7\,.
\end{align}
If the SDP~\eqref{eq:SDP} is feasible, then so is
the SDP~\eqref{eq:SDP_Lovasz_feas}.
Let $M$ be a feasible point of the latter.
Then $\diag(M)$ is a point in $\textup{TH}(G_{4+})$
satisfying $1 + \sum_{a\in G_{4+}} M_{aa} = 2^7$.
This ends the proof.
\end{proof}
Similar to the SDP in Eq.~\eqref{eq:SDP},
this relaxed SDP of Eq.~\eqref{eq:SDP_Lovasz_feas}
is computationally expensive.
However, it is possible to run
a symmetry-reduced version, the dual of which has been been numerically solved in Ref.~\cite[Appendix B]{munne2024sdpboundsquantumcodes}.
This dual solution acts as an infeasible certificate for Eq.~\eqref{eq:SDP_Lovasz_feas},
showing that such point in
$\textup{TH}(G_{4+})$ cannot exist.
Thus we can also prove that $\eEight$ is entangled
via a relaxation involving the theta body.
Details and an exact certificate are given in Appendix~\ref{app:sym_red_Lovasz_app}.

\smallskip
\noindent
{\bf Entanglement witness. --- }
The open problem in Yu et al.~\cite{Yu2021}
explicitly also asks for an operational entanglement criterion.
We derive an operator witness;
i.e., an observable that, if negative on a state $\varrho$,
shows that $\varrho$ is entangled.
For this we relate our approach with that by Gois et al.~\cite{PhysRevA.107.062211}.

Consider two sets of observables $\{A_{a}\}_{a=1}^{m}$, $\{B_{a}\}_{a=1}^{m}$
with
anticommutativity graphs
$G_{A}$, $G_{B}$.
Then the following is an entanglement witness~\cite{PhysRevA.107.062211},
     \begin{equation}
     \label{eq:generalWitness}
    W = \sqrt{\vartheta(G_{A})\vartheta(G_{B})} \, \one \, - \sum_{a=1}^m A_{a}\otimes B_{a}\,.
    \end{equation}
That is,
$\tr(W\varrho^{\text{sep}})\geq 0$
for all states $\varrho^{\text{sep}}$ that are separable across the bipartition $A|B$;
while
$\tr(W\varrho^{\text{ent}})< 0$ holds
for at least one entangled state
$\varrho^{\text{ent}}$.
From
Eq.~\eqref{eq:generalWitness}
follows that
entanglement in $\eEight$ is
detected by the witness,
\begin{align}\label{eq:witness}
W_{\eEight} &=
\vartheta \big(G_{4+}\big) \one \,
- \sum_{a\in G_{4+}} \!\!E_a \ot E_a^{\dagger}\,.
\end{align}

However, obtaining $\vartheta(G_{4+})$
is computationally challenging due to adverse scaling.
In Appendix~\ref{app:sym_red_Lovasz_app} we compute a
 symmetry-reduced relaxation
with value
\begin{equation}
\label{eq:Lovasz_sym_value}    \vartheta_{\text{sym}}\big(G_{4+}\big)
    \approx 126.8876 \,.
\end{equation}
Due to the fact that $\vartheta_{\text{sym}}$ is a relaxation
 of $\vartheta$,
one has $\vartheta_{\text{sym}}(G_{4+})
 \geq \vartheta(G_{4+})$.
Surprisingly,
this relaxation is already enough to obtain a witness.
Indeed the operator
\begin{equation}
\label{eq:witness-sym}
W_{\eEight}' =
\vartheta_{\text{sym}} \big(G_{4+}\big) \one \,
- \sum_{a\in G_{4+}} E_a \ot E_a^{\dagger}\,,
\end{equation}
with
$\vartheta_{\text{sym}}\big(G_{4+}\big) \approx 126.8876$
also detects entanglement in $\eEight$,
as
$\tr(W' \eEight) < 0$ due to Eq.~\eqref{eq:SwapExp}
and the expansion in Eq.~\eqref{eq:swap_expansion}.
In fact, one can show (Appendix \ref{app:exact})
that $\vartheta_{\text{sym}}\big(G_{4+}\big) = \vartheta\big(G_{4+}\big)$ due to symmetries of the Pauli anti-commutativity graph, making the relaxation exact.

\smallskip
\noindent{\bf Conclusions. --- }
We have derived our main result,
$\eEight$ is entangled, via two semidefinite programming formulations.
Of these, the SDP~\eqref{eq:SDP}
is the strongest.
In turn,
the SDP~\eqref{eq:SDP_Lovasz_feas}
is a natural relaxation,
which has also already appeared in the
context of bounds on quantum codes~\cite{
munne2024sdpboundsquantumcodes}.
This last SDP then naturally links up to
a construction known from entanglement theory~\cite{PhysRevA.107.062211},
yielding a
computationally practical
witness [Eq.~\eqref{eq:witness}].
In contrast to previous works,
our constructions clarifies
the connection
to the symmetric extension hierarchy,
a well known computational method
for outer approximations to the set of separable states.
The key insight is to combine symmetric extension with moment matrices,
yielding an object that can more easily be truncated
than density matrices.

Several natural constraints
could be added to the SDP in Eq.~\eqref{obs:SDP},
for example constraints arising from positive partial transpose or state inversion conditions~\cite{PhysRevLett.111.030501, 0034-4885-67-3-R03}.
Also, purity type constraints from swap expectations,
e.g.,
$\tr\big((123) \sum_i p_i \mu^{\ot 3}\big)
=
2^{-7} \tr\big((12) \sum_i p_i \mu_i^{\ot 2}\big)$,
and structural relations, e.g., $\Gamma_{ab} \propto \Gamma_{\a\b}$ if
$\av{E_a}\av{E_b^\dag}\av{E_a^\dag E_b} \propto
\av{E_\a}\av{E_\b^\dag}\av{E_\a^\dag E_\b}$,
can be imposed on the moment matrix, c.f.~\cite[Eq.~(68)]{munne2024sdpboundsquantumcodes}.
While not needed to show entanglement in
$\eEight$, these can strengthen our method.
Finally, one could generalize the approach to
detect entanglement in
non-symmetric states.

Our approach
establishes remarkable
links between
entanglement detection, graph theory, and coding bounds that revolve around the Lovász theta number.
One can hope that the Terwilliger methods develop here can be used to determine analytical bounds on the (anti-) commutation index (also known as $\beta$-number)~\cite{PRXQuantum.5.020318, PhysRevLett.132.200202},
relevant in the study of quantum Hamiltonians~\cite{hastings2023optimizing,cbqf-d24r}
and
stabilizer testing~\cite{10.1145/3717823.3718201}.

\smallskip
{\it Acknowledgments.}
We thank
Otfried Gühne,
Chau Nguyen,
Siyuan Qi,
Albert Rico,
Lisa Weinbrenner,
Zhen-Peng Xu,
and
Xiao-Dong Yu,
for fruitful discussions.
JS, SL, FH were funded in whole or in part by the National Science Centre, Poland 2024/54/E/ST2/00451
and by the Polish National Agency for Academic Exchange under the Strategic Partnership Programme grant BNI/PST/2023/1/00013/U/00001.
For the purpose of Open Access,
the authors have applied a CC-BY public copyright licence to any Author Accepted Manuscript (AAM) version arising from this submission.
GAM was supported by NCN grant no. 2024/53/B/ST2/04103.

\section{Rains' Lemma}
\label{app:aux}

\RainsLemma*

\begin{proof}
We rewrite,
\begin{align}
    \big[(12) (A \ot A^\dag)\big]^{T_1}
    &=
    \big[(12) (A \ot I) (I \ot A^\dag)\big]^{T_1}
    \nn\\
    &=
    \big[(12) (A \ot I)\big]^{T_1} (I \ot A^\dag)
    \nn\\
    &=
    [(I \ot A) (12) \big]^{T_1} (I \ot A^\dag)
    \nn\\
    &=
    (I \ot A) (12)^{T_1} (I \ot A^\dag) \succeq 0\,,
\end{align}
where in last inequality we
used the fact that that
$(12)^{T_1} = d \dyad{\phi^+} \succeq 0$ is proportional to the Bell state with local dimensions $d$,
and $XAX^\dag \succeq 0 $ for all $X$ and all $B\succeq 0$.
This ends the proof.
\end{proof}

\section{Properties of $\Gamma$}
\label{app:gamma}
\subsection{Auxilliary proof for Observation~\ref{obs:SDP}}
We now derive Eqs.~\eqref{eq:Gamma_pos} and $\eqref{eq:Gamma_entries}$.
Recall that $\Gamma$ has entries
$    \Gamma _{ab} = \tr(R_{a}^{\dag}R_{b}\Psi)$,
where
\begin{align}
    \Psi &= [(13)\eEight^{(3)}]^{T_{1}} \,,\nn\\
    R_{a} &= \frac{1}{\sqrt{2}}(13)^{T_{1}}(I \ot E_{a} \ot E_{a}^{\dag})\,, \quad E_{a} \in \mathcal{E}_{7}\,,
\end{align}
with $\mathcal{E}_{7}$
the set of all Pauli strings of length $7$.

To show Eq.~\eqref{eq:Gamma_pos}, note that for all $x \in \C^{4^n}$,
\begin{align}
   \nn x^{\dag}\Gamma x  & =\sum_{ab}\bar{x}_{a}\Gamma_{ab}x_{b} \\
 \nn   & =\sum_{ab} \bar{x}_{a}\tr(R_{a}^{\dag}R_{b}\Psi)x_{b}\\
   \nn & =\tr(\sum_{b}x_{b}R_{b}\Psi \sum_{a}\bar{x}_{a}R_{a}^{\dag}) \\
   & =\tr(vv^{\dag}\Psi)\geq 0\,,
\end{align}
where
$v = \sum_a \bar{x}_a R_a^\dag$. Since $\Psi$ is positive semidefinite,  the last inequality follows from the fact that the trace inner product of two semidefinite operators is non-negative.

To show Eq.~\eqref{eq:Gamma_entries}
we rewrite the expression
for the entries of the moment matrix as
\begin{align}
     \Gamma_{ab} =& \tr\big(R_{a}^{\dag}R_{b}[(13)\eEight^{(3)}]^{T_{1}}\big) \nn\\=& \tr\big((R_{a}^{\dag}R_{b})^{T_{1}}(13)\eEight^{(3)}\big)\,,
\end{align}
where we have used that
$\tr(A^{T_{1}}B) = \tr(AB^{T_{1}})$
holds for all bipartite operators $A$, $B$.

We then evaluate,
\begin{align}\label{eq:eval_R}
    &(R_{a}^{\dag}R_{b})^{T_{1}} \nn\\
    &=
    \frac{1}{2}\Big( \big[(13)^{T_{1}}
    (I \otimes E_{a} \otimes E_{a}^{\dag})\big]^{\dag}(13)^{T_{1}}
    (I \otimes E_{b} \otimes E_{b}^{\dag})
    \Big)^{T_{1}} \nn\\
    &=
    \frac{1}{2}\Big(
    (I \otimes E_{a}^{\dag}\otimes E_{a})  \big[(13)^{T_{1}}\big]^{\dag}
    (13)^{T_{1}}
    (I \otimes E_{b} \otimes E_{b}^{\dag})
    \Big)^{T_{1}}
    \nn\\
    &=\big((I \otimes E_{a}^{\dag}\otimes E_{a})  (13)^{T_{1}}(I \otimes E_{b} \otimes E_{b}^{\dag})\big)^{T_{1}}
    \nn\\
    & = (I \otimes E_{a}^{\dag}\otimes E_{a})  (13)(I \otimes E_{b} \otimes E_{b}^{\dag}) \nn\\
    & = (I \otimes E_{a}^{\dag}\otimes E_{a})  (E_{b}^{\dag} \otimes E_{b} \otimes I)(13)\,,
\end{align}
where the third line follows from the fact that $(13)^{T_{1}} = 2 \dyad{\phi^{+}}_{13}
\ot \one_2$
with $\dyad{\phi^{+}}$
projector onto the Bell state
$\ket{\phi^{+}} = (\ket{00} + \ket{11})/\sqrt{2}$ .
For the fourth and fifth line of Eq.~\eqref{eq:eval_R}
we used the identities,
\begin{align}
     (X^{T_{1}})^{T_{1}} &= X\,, \nn\\
     \big(X(I\ot Y)\big)^{T_{1}} &= X^{T_{1}}(I\ot Y)\,,\nn\\
    (13)(A\ot B \ot C) &= (C\ot B \ot A)(13)\,.
\end{align}

By Eq.~\eqref{eq:eval_R}, the entries of the moment matrix are then
\begin{align}
\label{eq:momentEntries}
&\Gamma_{ab} = \tr(R_{a}^{\dag}R_{b} \Psi) \nn\\
&=\tr\big(
\big[(I \otimes E_{a}^{\dag}\otimes E_{a})(E_{b}^{\dag} \otimes E_{b}\otimes I)(13)\big]^{T_1}\big[(13)\eEight^{(3)}\big]^{T_1}\big) \nn\\
  \nonumber
  & =\tr\big((I \otimes E_{a}^{\dag}\otimes E_{a})(E_{b}^{\dag} \otimes E_{b}\otimes I)\eEight^{(3)}\big) \\
\nonumber
& = \tr\big(
(I\ot E_{a}^{\dag}\ot E_{a})(E_{b}^{\dag}\ot E_{b}\ot I)(\sum_{i}p_{i}\mu_{i}^{\ot 3})
\big)\\
&=\sum_{i}p_{i}\tr(E_{b}^{\dag}\mu_{i})\tr(E_{a}\mu_{i})\tr(E_{a}^{\dag}E_{b}\mu_{i})\,,
\end{align}
where we have used  $(13)^{2} = I$ and
the decomposition $\eEight^{(3)} = \sum_{i}p_{i}\mu_{i}^{\ot 3}$.

\subsection{Moment matrix formulation}
\label{app:MomentMatrixFormulation}

To obtain a better intuition, we formulate $\Gamma$ as a positive-semidefinite Gram matrix.

 Let us use the fact that the $\mu_{i}$ state is a pure state and it can be written as $\dyad{\mu_{i}}$, for some vector $\ket{\mu_{i}}$. We define the set of $N+1 = 4^7$ vectors\
\begin{equation}
\label{eq:explicitLambda2}
    \ket{E_{\beta}^{(i)}} = \tr(E_{\beta}^{\dag}\mu_{i})\big(E_{\beta}\ket{\mu_{i}}\big)\,,\quad \beta =0,\ldots, N \,.
\end{equation}
 Now, it follows that
\begin{align}
\label{eq:convex_sum2}
\nonumber\Gamma_{ab} &=\sum_{i}p_{i} \tr(E_{b}^{\dag}\mu_{i})\tr(E_{a}\mu_{i})\tr(E_{a}^{\dag}E_{b}\mu_{i}) \\&=\sum_{i}p_{i}\braket{E_{a}^{(i)}}{E_{b}^{(i)}}\,.
\end{align}
The full moment matrix then reads as
\begin{equation}
   \Gamma =  \sum_{i}p_{i}
\begin{pmatrix}
 \text{--- } & \bra{E_0^{(i)}}&  \text{--- }\\
    \\
    \vdots & \vdots & \vdots\\
    \\
     \text{--- } & \bra{E_{N-1}^{(i)}}^{} &  \text{--- }\\
\end{pmatrix}
\begin{pmatrix}
|&\ldots & |\\
& & \\
     \ket{E_0^{(i)}}& \ldots & \ket{E_{N-1}^{(i)}}\\
       && \\
   |  &\ldots & |\\
\end{pmatrix}
\,,
\end{equation}
which means that the $\Gamma$ is a convex combination of Gram matrices composed of vectors $\ket{E_{\beta}^{(i)}}$.

\subsection{Diagonal elements}
\label{app:correlations}

Here, we derive coarse-grained constraints
on the diagonal entries of the moment matrix. From these, we then obtain the exact diagonal terms of the moment matrix that appear in Eq.~\eqref{eq:SDP} of the main text.

For any $i\in \{1,\ldots,n\}$, the sum over diagonal entries corresponding to Pauli strings of weight $i$ takes the form
\begin{align}
   \nonumber& \sum\limits_{\substack{E_{a}\in\mathcal{E}_n\\
   \text{wt}(E_{a})= i }}\Gamma_{aa}=\sum\limits_{\substack{E \in \mathcal{E}_n\\
   \text{wt}(E)= i }}\tr\big[(E\otimes E^{\dag})\eEight\big] \\&= \sum\limits_{\substack{S \subseteq [n]\\ |S|=i}}\ \sum\limits_{\substack{E\in \mathcal{E}_n\\\text{supp}(E) = S}}
    \tr\big[(E\otimes E^{\dag})\eEight\big]\label{eq:app-coarse-grained}
\end{align}
where $[n]:=\{1,\ldots,n\}$.
To evaluate this sum, we relate these terms to the SWAP operator. This operator acting on a subsystem $S\subseteq [n]$ can be expanded in the Pauli basis as the sum over all strings whose support is contained in $S$,
\begin{equation}
    \pi_{S} = \bigotimes_{k\in S} \text{SWAP}_{A_{k},B_{k}} = \frac{1}{2^{|S|}}\sum\limits_{\substack{E\in\mathcal{E}_n\\\text{supp}(E)\subseteq S}} E\otimes E^{\dag} \,.
\end{equation}

Note that the constraint in Eq.~\eqref{eq:app-coarse-grained} requires summing over strings with support equal to $S$, whereas the SWAP formula yields a sum over all its subsets.
When summing over all subsets $T\subseteq S$ with an alternating sign factor, that depends on the size difference between $T$ and $S$, we effectively filter out the contributions of all subsets different from $S$.
This alternating sum reformulates Eq.~\eqref{eq:app-coarse-grained} as
\begin{align}
\label{eq:constraintSum}
   \nonumber& \sum\limits_{\substack{E_{a}\in\mathcal{E}_n \\\text{wt}(E_{a})= i }} \Gamma_{aa}\\ \nonumber
   =&\sum\limits_{\substack{S\subseteq [n]\\ |S|=i}}\sum\limits_{\substack{T\subseteq S}} \sum\limits_{\substack{E\in\mathcal{E}_n\\\text{supp}(E)\subseteq T}} (-1)^{|S|-|T|}\tr(E\ot E^{\dag}\eEight)
   \\ =&\sum\limits_{\substack{S\subseteq [n]\\|S|=i}}\sum\limits_{T\subseteq S} (-1)^{|S|-|T|}2^{|T|}\tr(\pi_{T}\eEight) \,.
\end{align}

Finally, we simplify this expression by evaluating the SWAP expectation value defined in Eq.~\eqref{eq:SwapExp} and counting the subsets by their size as
\begin{align}
    \label{eq:exactVal}
      \sum\limits_{\substack{E_{a}\in\mathcal{E}_n \\\text{wt}(E_{a})= i }} \Gamma_{aa}
   = \sum_{j=0}^{i} \binom{n}{i}\binom{i}{j}\frac{(-1)^{i-j}2^{j}}{\min\{2^{j},2^{n-j}\}} \,,
\end{align}
where the sum runs over the size of the subset $|T|=j$. The first binomial coefficient counts all possible choices for the set $S$ of size $|S|=i$ in the $n$-element string, while the second accounts for the elements inside $S$ to form the set $T$. For completeness, we present the numerical value of the sum in Eq.~\eqref{eq:exactVal} as the vector
\begin{equation}
\label{eq:app-vector-DiagA}
    A=[1,0,0,0,35,42,28,22]\,,
\end{equation}
where the $i$-th entry corresponds
to weight $i$.

Additionally, the individual diagonal entries $\bar\Gamma_{aa}$ of the moment matrix can be derived directly from Eq.~\eqref{eq:exactVal}. This relies on two key symmetries of $\Phi_{\text{E}8}$.

First, it presents a strong permutational invariance. For any permutation $\sigma\in S_7$
acting jointly on both $A$ and $B$ parties, the state is invariant:
\begin{equation}
    (\sigma_A\ot\sigma_B)\eEight(\sigma_A\ot\sigma_B)^\dagger=\eEight\,.
    \label{app:eq-state-sigma-invariance}
\end{equation}

Second, the state is composed of tensor products of symmetric and antisymmetric projectors, $P_\pm$, in the subsystems $(A_i,B_i)$. These projectors are invariant under the joint action of any unitary $U\in\mathrm{SU}(2)$:
\begin{equation}
    \big(U\otimes U\big)\,P_\pm\, \big(U^\dagger\otimes U^\dagger\big)=P_\pm\,.
    \label{eq:app-invariance-p-pm}
\end{equation}
This symmetry extends to the state $\Phi_{\text{E}8}$,
\begin{equation}
    \bigotimes_{i}\Big(U_{A_i}^{(i)}\otimes U_{B_i}^{(i)}\Big)\,\eEight\, \bigotimes_{i}\Big({U_{A_i}^{(i)}}\otimes {U_{B_i}^{(i)}}\Big)^\dagger=\eEight\,,\label{app:eq-state-unitary-invariance}
\end{equation}
where possibly different unitaries $\{U^{(i)}\}_{i=1}^7$ act on pairs $\{{(A_i,B_i)}\}_{i=1}^7$.

These two properties allow us to conclude that the diagonal elements of the moment matrix satisfy
\begin{equation}
    \Gamma_{aa}=\tr\big[(E_a\otimes E_a^\dagger)\Phi_{\text{E}8}\big]=\tr\big[(E_b\otimes E_b^\dagger)\Phi_{\text{E}8}\big]=\Gamma_{bb}\,,
\end{equation}
if $\text{wt}(E_a)=\text{wt}(E_b)$, as any Pauli string can be transformed into another one with the same weight by the transformations defined in Eqs.~\eqref{app:eq-state-sigma-invariance} and~\eqref{app:eq-state-unitary-invariance}. Consequently, the explicit expression of the diagonal entries reads
\begin{equation}
\label{eq:DiagonalTerm}
    \Gamma_{aa}=\dfrac{1}{3^i}\sum_{j=0}^{i}\binom{i}{j}\frac{(-1)^{i-j}2^{j}}{\min\{2^{j},2^{n-j}\}} \,,
\end{equation}
which yields the vector in Eq.~\eqref{eq:app-vector-DiagA}, divided by the total number $3^i\binom{n}{i}$ of Pauli strings with weight $i$ for each of the entries:
\begin{equation}
\label{eq:EntriesVector}
    a=\bigg[1,0,0,0,\frac{1}{3^4},\frac{2}{3^5},\frac{4}{3^6},\frac{22}{3^7}\bigg]\,.
\end{equation}

\section{Infeasibility certificate
for Observation 1}
\label{app:inf_cert}

\begin{remark}
For the derivations that follow,
we for simplicity
do not impose certain constraints that
appear in the related work on coding bounds Ref.~\cite{munne2024sdpboundsquantumcodes}.
See Appendix~\ref{app:relation} for details.
 \end{remark}

\obsSDPinfeas*
\begin{proof}

To show this requires the following steps:
\begin{enumerate}
    \item[A.] Derive a relaxation of Eq.~\eqref{eq:SDP}.
    \item[B.] Block-diagonize this relaxation.
    \item[C.] Derive its dual semidefinite program.
    \item[D.]
     Solve this dual SDP exactly using a semidefinite programming solver with heuristic rounding to rational numbers.
\end{enumerate}
We proceed with these steps below.
Most of the following theory holds for general $n$, and we will set $n=7$ later.

\subsection{Relaxation}

Consider the set of n-qubit Pauli strings,
and consider
wreath product $S_3 \wr S_n$~\cite{Ceccherini-Silberstein_Scarabotti_Tolli_2014}
that:
\begin{enumerate}
    \item permutes the tensor factors by $S_n$, and
    \item independently permutes the non-identity Pauli matrices at each coordinate by $S_3$.
\end{enumerate}

We now relax the SDP of
Observation~\ref{obs:SDP}
by averaging over the group
$S_3 \wr S_n$.
In particular, if
$\bGamma$ is feasible for SDP~\eqref{eq:SDP},
then the averaged matrix
$\tilde \Gamma$
(an element of the Terwilliger algebra~\cite{GIJSWIJT20061719})
is feasible for the averaged SDP.
As a consequence, if the averaged SDP is infeasible, so is the original SDP~\eqref{eq:SDP}.

Denote by $\supp(E_a)$ the support of a Pauli string $E_a$, that is, the subsystems it acts non-trivially on,
and let the weight $\wt(E_a) = |\supp(E_a)|$ be its size.
Let a permutation
$\pi$
(which will choose to be in $S_3 \wr S_n$) act on $\Gamma$ by permuting its rows and columns as,

\begin{equation}
\label{eq:transformation}
    \pi(\bar\Gamma)_{ab} = \bar{\Gamma}_{\pi^{-1}(a) \pi^{-1}(b)}\,.
\end{equation}
That is,
$\pi$ acts on $\bGamma$ by adjoint unitary action.
With some abuse of notation we will also write
$\pi(\bar\Gamma) = \pi\bar\Gamma \pi^{-1
}$.

Let $(E_a)_{a=0}^{N}$ with $N=4^{n-1}$ be vector of Pauli strings that index~$\Gamma$.
Given row $a$ and column $b$, consider
\begin{align}\label{eq:invariants}
\wt(E_a)&=i \,, \nn\\
\wt(E_b)&=j\,,\nn\\
|\supp(E^\dagger_a)\cap \supp(E_b)\rvert&= t \,,\nn\\
\wt(E^\dagger_aE_b) \, &= \, i+j-t-p \,.
\end{align}

Denote by $\Pi$ the set of all permutations that, for every row labeled by $E_a$ and column labeled by $E_b$,
leave tuple $(i,j,t,p)$
from Eq.~\eqref{eq:invariants} invariant.
We then define the \textit{symmetrized} version of our moment matrix as the average of the original matrix over elements of $\PP$,
\begin{equation}
    \tilde \Gamma = \frac{1}{|\PP|}
    \sum_{\pi \in \Pi} \pi (\bGamma )\,.
    \label{eq:app-gamma-twirled}
\end{equation}
As a consequence of this average, the entries of the symmetrized moment matrix depend only on the tuple $(i,j,t,p)$.
In particular, for every pair $E_a,E_b$
with parameters $(i,j,t,p)$  one has
\begin{align}\label{eq:lambda_anycode}
    \tilde\Gamma_{ab}=
    x^{t,p}_{i,j}= \frac{1}{\gamma^{t,p}_{i,j}}\!\!\!\! \sum\limits_{
        \substack{E_a,E_b\in \mathcal{E}_n \\|s(E^\dagger_a)| \,=\, i
            \\ |s(E_b)| \,= \,j
            \\ \lvert s(E^\dagger_a)\cap s(E_b)\rvert \,= \,t
             \\ \lvert s(E^\dagger_aE_b)\rvert \, = \, i+j-t-p \\}} \!\!\!\!\!\!{\overbar\Gamma}_{ab}  \,.
\end{align}
Here $\gamma^{t,p}_{i,j}$ is a normalization factor
\begin{align}
\gamma_{i,j}^{t,p}
	&= 3^{i+j-t}2^{t-p}\binom{n}{p,t-p,i-t,j-t}\,,
\end{align}
and the multinomial coefficient is defined as
\begin{align}
\binom{n}{a_1,\dots,a_r}= \frac{n!}{a_1!\dots a_r!(n-\sum^r_{\ell=1} a_\ell)!}\,.
\end{align}
Since $\pi$  acts on $\bGamma$ by unitary adjoint action [Eq.~\eqref{eq:transformation}],
it follows that if
$\bGamma \succeq 0$,
then $\tilde \Gamma \succeq 0$.

\subsection{ Block diagonalization}
Using the framework introduced in the previous section, we recall from Ref.~\cite{GIJSWIJT20061719} that $\tilde \Gamma \succeq 0$ if and only if
\begin{align}
\label{eq:symmetrizePSD}
\bigoplus_{\substack{a,k \in \N_0\\ 0\leq a\leq k\leq n+a-k}} \left(\sum\limits_{\substack{t,p\in \N_0 \\ 0 \leq p \leq t \leq i,j \\ i+j\leq t+n}}\alpha(i,j,t,p,a,k)x_{i,j}^{t,p}\right)_{i,j=k}^{n+a-k}   \!\!\!\!\!\!\!\!\succeq 0\,.
\end{align}
Here $\alpha(i,j,t,p,a,k)$ are combinatorial factors,
\begin{align}\label{eq:gamma}
	\alpha(i,j,t,p,a,k) &= 3^{\frac{1}{2}\left(i+j\right)-t}\beta_{i-a,j-a,k-a}^{n-a,t-a} \\&\times\sum\limits_{g=0}^{p}2^{t-a-p+g}\left(-1\right)^{a-g}\binom{a}{g}\binom{t-a}{p-g}\,, \nn
\end{align}
with
\begin{align}
	\beta_{i,j,k}^{m,t}&=\sum\limits_{u=0}^{m}\left(-1\right)^{t-u}
    \binom{u}{t}\binom{m-2k}{m-k-u} \nn\\&\quad \quad\quad\times\binom{m-k-u}{i-u}\binom{m-k-u}{j-u}\,.
\end{align}

Following Ref.~\cite{munne2024sdpboundsquantumcodes}
we relax the SDP of Eq.~\eqref{eq:SDP} to:

\bigskip
\noindent\underline{primal:}
\begin{align}\label{eq:sdpx_relax}
	\textnormal{find} 		\quad & x^{t,p}_{i,j}  \nn \\
	\textnormal{s.t.} 	\quad
	&  x^{0,0}_{0,0}=1 \\
    & x^{0,0}_{i,0}=x^{i,i}_{i,i}, \quad \text{for} \quad i=1,\dots, n \nn \\
    &x^{t,p}_{i,j}=0 \quad\quad \text{if}\quad  t-p \quad  \text{is odd}\,, \nn \\
    &\gamma^{i,i}_{i,i}x^{i,i}_{i,i}=\sum_{\substack{E\in \mathcal{E}_n \\\wt(E)=i}}\!\!\tr\big( (E\otimes E^\dagger)
   \eEight \big) \,, \quad i=1,\dots, n \nn \\
	&\bigoplus_{\substack{a,k \in \N_0\\ 0\leq a\leq k\leq n+a-k}} \left(\sum\limits_{\substack{t,p\in \N_0 \\ 0 \leq p \leq t \leq i,j \\ i+j\leq t+n}}\alpha(i,j,t,p,a,k)x_{i,j}^{t,p}\right)_{i,j=k}^{n+a-k}   \!\!\!\!\!\!\!\!\succeq 0\,, \nn
\end{align}
The SDP in Eq.~\eqref{eq:sdpx_relax}
is an adaptation of Ref.~\cite[Section~9.1]{munne2024sdpboundsquantumcodes}:
first, without the off-diagonal constraints $x_{i,j}^{t,p} = 0$ if $i+j-t-p < \delta$;
and second, fixing every $x_{i,i}^{i,i}\gamma_{i,i}^{i,i}$, instead of fixing their sum.
Further details on why this change happens are given in Appendix~\ref{app:relation}.

\subsection{Dual SDP}\label{sec:dualsdp}
Let us now derive the SDP
that is dual to  Eq.~\eqref{eq:sdpx_relax}.
It reads:

\bigskip
\noindent\underline{dual:}
\begin{align}\label{eq:dual_sol}
	\beta^* =
    \max_{Y^{(a,k)}} \quad
	&   -y_{0,0}^{0,0} - \sum^n_{i=1}  (2y_{i,0}^{0,0} + y^{i,i}_{i,i} ) \\ \nn &\quad\quad\quad\quad \times \sum_{\substack{E\in \mathcal{E}_n \\\wt(E)=i}}\!\! \tr\big( (E \otimes E^\dagger)
   \eEight \big)  \nn \\
	\text{s.t.}  \quad & Y^{(a,k)} \succeq 0\,, \nn \\
	&  y^{t,p}_{i,j}=0
	\quad  \text{if} \quad i,j\neq 0\, \,\, \text{and}  \,\,\, t-p \,\, \text{is even } \nn
\end{align}
where $Y^{(a,k)}$  are real matrices with indices constrained by $0 \leq a \leq k \leq n + a -k $ and
\begin{align}
    y^{t,p}_{i,j} = \frac{1}{\gamma^{t,p}_{i,j}}\sum^{\min(i,j)}_{k=0} \sum^k_{\substack{a=\max(i,j)\\\,+k-n}}  \alpha(i,j,t,p,a,k) Y^{(a,k)}_{i-k,j-k}.
\end{align}

    This dual is a variation of~\cite[Proposition 26]{munne2024sdpboundsquantumcodes}. Due to SDP duality, the dual variable $w$ from that SDP corresponds to the primal constraint $(2^n-1)=\sum^n_{i=1} x^{i,i}_{i,i}\gamma^{i,i}_{i,i}$.
    This constraint does not appear in the SDP of Eq.~\eqref{eq:sdpx_relax}, instead we have as primal constraints \begin{equation}
        \sum_{{E\in \mathcal{E}_n:\,\wt(E)=i}} \tr\big( (E \otimes E^\dagger)
   \eEight \big) = x^{i,i}_{i,i}\gamma^{i,i}_{i,i}\,,
    \end{equation}
    for $i=1,\dots, n$.
   These correspond to dual variables $w_i$ which are introduced by modifying~\cite[Proposition 26]{munne2024sdpboundsquantumcodes}:
    \begin{itemize}

         \item[i)] The dual variable  is changed:
        $w\rightarrow w_i$.
        \item[ii)] In the objective function, change
        \begin{align}
            (2^n-1)w \,\,\rightarrow\,\, \sum_{\substack{E\in \mathcal{E}_n\\\wt(E)=i}}\tr\big( (E \otimes E^\dagger)
   \eEight \big)w_i \,.
        \end{align}

\item[iii)] Change the constraint
\begin{align}
    y^{t,p}_{i,j}=0 \quad&\text{if}\quad
    i,j \neq 0
    \quad \text{and}\quad
    t-p \text{ even}
    \nn\\
    &\quad\quad\quad  \text{and}\quad
    i+j-t-p \geq \delta\,,
\intertext{to the constraint}
    y^{t,p}_{i,j}=0 \quad&\text{if}\quad
    i,j \neq 0
    \quad \text{and}\quad
    t-p \text{ even}\,.
\end{align}

\item[iv)] Defining the slack variables $w_i$ in terms of $y^{i,i}_{i,i}$ and $y^{0,0}_{i,0}$ leads to objective function of  Eq.~\eqref{eq:dual_sol}.

\end{itemize}

\subsection{4. Solve dual SDP and round}
The primal SDP~\eqref{eq:sdpx_relax} is a feasibility problem, and thus $C = 0$ in Eq.~\eqref{eq:PrimalSDPdef}. As a consequence, a feasible point of the dual SDP
\eqref{eq:dual_sol}
with positive objective value signals
infeasibility of the primal SDP.

The matrices listed below constitute  a feasible point
of SDP~\eqref{eq:dual_sol}
with positive objective value $\beta = 1$,

\begin{widetext}
\begin{center}
\scriptsize
\renewcommand{\arraystretch}{1.25}      
\setlength{\arraycolsep}{2pt}
\begin{align}
\nn Y^{(0,0)} = &\begin{bmatrix}
6744z + 2188885 & 0 & 0 & 0 & 104766z - 331211 & -72585z - 163934 & 27937z - 537566 & -12674z - 680320 \\
0 & -2z + 767759 & 0 & 0 & 0 & 0 & 0 & 0 \\
0 & 0 & -z + 54083 & 0 & 0 & 0 & 0 & 0 \\
0 & 0 & 0 & -z + 18880 & 0 & 0 & 0 & 0 \\
104766z - 331211 & 0 & 0 & 0 & -z + 10949 & 18779 & 33614 & 46156 \\
-72585z - 163934 & 0 & 0 & 0 & 18779 & -z + 41486 & 61654 & \frac{3043753}{35} \\
27937z - 537566 & 0 & 0 & 0 & 33614 & 61654 & -z + 116898 & 151157 \\
-12674z - 680320 & 0 & 0 & 0 & 46156 & \frac{3043753}{35} & 151157 & -17z + 280007
\end{bmatrix}, \allowdisplaybreaks\\
\nn Y^{(0,1)} = &\begin{bmatrix}
-84z + 32245878 & 0 & 0 & 0 & 0 & 0 \\
0 & -\frac{367}{30}z + \frac{32590321}{30} & 0 & 0 & 0 & 0 \\
0 & 0 & 2z + 140265 & 0 & 0 & 0 \\
0 & 0 & 0 & 13586 & 15723 & -39534 \\
0 & 0 & 0 & 15723 & -z + 21103 & -29608 \\
0 & 0 & 0 & -39534 & -29608 & -4z + 229721
\end{bmatrix}, \allowdisplaybreaks\\
 \nn Y^{(0,2)} =& \begin{bmatrix}
-\frac{104}{3}z + \frac{31108352}{3} & 0 & 0 & 0 \\
0 & \frac{247}{18}z + \frac{72477905}{72} & 0 & 0 \\
0 & 0 & -21z + 190334 & -\frac{6600763}{72} \\
0 & 0 & -\frac{6600763}{72} & 36z + 473063
\end{bmatrix}, \allowdisplaybreaks\\
\nn Y^{(0,3)} =& \begin{bmatrix}
-\frac{353}{3}z + \frac{53700905}{12} & 0 \\
0 & 107z + \frac{133285}{2}
\end{bmatrix}, \allowdisplaybreaks\\
\nn Y^{(1,1)} =& \begin{bmatrix}
10z + 24263194 & 0 & 0 & 0 & 0 & 0 & 0 \\
0 & 3z + 957137 & 0 & 0 & 0 & 0 & 0 \\
0 & 0 & 2z + 213736 & 0 & 0 & 0 & 0 \\
0 & 0 & 0 & -2z + 18577 & -4392 & -24417 & \frac{129157}{5} \\
0 & 0 & 0 & -4392 & z + 5024 & 13800 & -2541 \\
0 & 0 & 0 & -24417 & 13800 & -2z + 65599 & 26742 \\
0 & 0 & 0 & \frac{129157}{5} & -2541 & 26742 & -27z + 213536
\end{bmatrix}, \allowdisplaybreaks\\
\nn Y^{(1,2)} =& \begin{bmatrix}
-\frac{1405}{6}z + \frac{213766795}{6} & 0 & 0 & 0 & 0 \\
0 & -\frac{883}{8}z + \frac{173414105}{32} & 0 & 0 & 0 \\
0 & 0 & -33z + 218776 & -\frac{22465829}{144} & \frac{1644555}{4} \\
0 & 0 & -\frac{22465829}{144} & -10z + 763573 & -\frac{544565}{4} \\
0 & 0 & \frac{1644555}{4} & -\frac{544565}{4} & 5z + 1031762
\end{bmatrix}, \allowdisplaybreaks\\
\nn Y^{(1,3)} =& \begin{bmatrix}
-338z + \frac{134968445}{4} & 0 & 0 \\
0 & \frac{291}{4}z + \frac{1792743}{4} & \frac{38117659}{48} \\
0 & \frac{38117659}{48} & -\frac{39}{8}z + \frac{11423859}{8}
\end{bmatrix},\allowdisplaybreaks\\
\nn Y^{(1,4)} =& \begin{bmatrix}
-\frac{139}{2}z + 2155897
\end{bmatrix}, \allowdisplaybreaks\\
\nn Y^{(2,2)} =& \begin{bmatrix}
35z + 14993587 & 0 & 0 & 0 & 0 & 0 \\
0 & -21z + 1738517 & 0 & 0 & 0 & 0 \\
0 & 0 & -5z + 109328 & 46435 & 128388 & -\frac{824124}{5} \\
0 & 0 & 46435 & 2z + 244491 & -224627 & \frac{699903}{5} \\
0 & 0 & 128388 & -224627 & -2z + 652767 & -103783 \\
0 & 0 & -\frac{824124}{5} & \frac{699903}{5} & -103783 & -35z + 1284622
\end{bmatrix},
\allowdisplaybreaks\\
\nn Y^{(2,3)} =& \begin{bmatrix}
-354z + \frac{75611175}{2} & 0 & 0 & 0 \\
0 & -46z + \frac{8456720}{3} & \frac{4881140}{27} & -\frac{9400155}{4} \\
0 & \frac{4881140}{27} & 43z + 234163 & \frac{756605}{2} \\
0 & -\frac{9400155}{4} & \frac{756605}{2} & -25z + 3267902
\end{bmatrix},
\allowdisplaybreaks\\
\nn Y^{(2,4)} =& \begin{bmatrix}
114z + 3679177 & -\frac{227366551}{36} \\
-\frac{227366551}{36} & -\frac{2169}{16}z + \frac{173699349}{16}
\end{bmatrix}, \allowdisplaybreaks\\
\nn Y^{(3,3)} =& \begin{bmatrix}
-93z + 8264420 & 0 & 0 & 0 & 0 \\
0 & 195z + 491796 & -485093 & -\frac{2641175}{32} & 1219792 \\
0 & -485093 & -14z + 938299 & 617405 & -770467 \\
0 & -\frac{2641175}{32} & 617405 & -8z + 648307 & \frac{2412575}{8} \\
0 & 1219792 & -770467 & \frac{2412575}{8} & 16z + 3446498
\end{bmatrix},
\allowdisplaybreaks\\
\nn Y^{(3,4)} =& \begin{bmatrix}
42z + 18555316 & \frac{82089313}{12} & \frac{83657415}{8} \\
\frac{82089313}{12} & \frac{181}{16}z + \frac{191223671}{16} & -\frac{27416555}{4} \\
\frac{83657415}{8} & -\frac{27416555}{4} & -\frac{187}{2}z + \frac{36150719}{2}
\end{bmatrix},\allowdisplaybreaks\\
\nn Y^{(3,5)} =& \begin{bmatrix}
\frac{6225}{16}z + \frac{109987}{16}
\end{bmatrix},
\allowdisplaybreaks\\
\nn Y^{(4,4)} =& \begin{bmatrix}
500z + 7512496 & -\frac{5486233}{3} & -\frac{6936135}{2} & -5363008 \\
-\frac{5486233}{3} & \frac{4331}{24}z + \frac{8316451}{8} & -\frac{2136405}{4} & 3091740 \\
-\frac{6936135}{2} & -\frac{2136405}{4} & -\frac{145}{6}z + \frac{57885767}{12} & -\frac{13496555}{8} \\
-5363008 & 3091740 & -\frac{13496555}{8} & \frac{3907}{16}z + \frac{1623646529}{176}
\end{bmatrix}, \allowdisplaybreaks\\
\nn Y^{(4,5)} =& \begin{bmatrix}
\frac{483}{2}z + \frac{68680549}{2} & \frac{78541565}{4} \\
\frac{78541565}{4} & -\frac{255}{8}z + \frac{89966209}{8}
\end{bmatrix} ,
\allowdisplaybreaks\\
\nn Y^{(5,5)} =& \begin{bmatrix}
\frac{2615}{8}z + \frac{310795617}{40} & -\frac{36338387}{8} & -\frac{39201804}{5} \\
-\frac{36338387}{8} & \frac{259}{80}z + \frac{106688481}{40} & \frac{73443691}{16} \\
-\frac{39201804}{5} & \frac{73443691}{16} & \frac{3508}{5}z + \frac{348797983}{44}
\end{bmatrix}, \allowdisplaybreaks\\
\nn Y^{(5,6)} =& \begin{bmatrix}
-\frac{4543}{20}z + \frac{1125579877}{40}
\end{bmatrix},
\allowdisplaybreaks\\
\nn Y^{(6,6)} =& \begin{bmatrix}
\frac{243}{20}z + \frac{69176449}{20} & -\frac{24100707}{4} \\
-\frac{24100707}{4} & \frac{65419}{160}z + \frac{3717945713}{352}
\end{bmatrix},
\allowdisplaybreaks\\
 Y^{(7,7)} =& \begin{bmatrix}
\dfrac{59879}{160}z + \dfrac{2364069}{352}
\end{bmatrix}\,.
\end{align}
\end{center}
\end{widetext}
\twocolumngrid
Above, $z = \sqrt{3}$ and so the entries are in the quadratic number field \(\mathbb{Q}(\sqrt{3})\),
yielding an exact infeasibility certificate.

From $\beta = 1$ it follows that the SDP in Eq.~\eqref{eq:SDP} is infeasible. This ends the proof.
\end{proof}

\section{Symmetry-reduced Lovász}
\label{app:sym_red_Lovasz_app}

\begin{remark}
For the derivations that follow,
we for simplicity
do not impose certain constraints that
appear in the related work on coding bounds Ref.~\cite{munne2024sdpboundsquantumcodes}.
See Appendix~\ref{app:relation} for details.
 \end{remark}

The following derivations are analogous to those of Appendix~\ref{app:inf_cert}.

Let $G_n$ be the anti-commutativity graph of $\mathcal{E}_n \backslash \one$.
Recall the averaging of Eq.~\eqref{eq:app-gamma-twirled} via permutation operators $\pi \in \Pi$.
Via this averaging, define a
symmetry-relaxed Lovász number,
\begin{align}\label{eq:SDP_Lovasz_sym}
\nn \vartheta_{\text{sym}}(G_n)=\text{max} \quad   & \sum_{a \in G_n} M_{aa} \nn\\
\text{s.t.} \quad
& M_{aa} = x_a \quad \quad \forall a\in G_n\nn\\
& M_{ab} = 0 \quad \quad\,\,\,\, \text{if}
\quad a\sim b \nn\\
&\frac{1}{|\Pi|}
\sum_{\pi \in \Pi} \pi
\begin{pmatrix}
1 & x^T \\
x & M
\end{pmatrix}\pi^{-1} \succeq 0
\end{align}
Note that $\vartheta_{\text{sym}}$
is also well-defined
for the Pauli graph
$G_{n, \delta}$
\footnote{For simplicity we used $G_{4+} = G_{7,4}$ in the main text.},
that has all vertices $a$ with $\wt(E_a) < \delta$ removed.
For $\vartheta_{\text{sym}}(G_{n, \delta})$
one thus adds the additional constraint
\begin{equation}
    x_{a}=0 \quad \text{if} \quad 0<\wt(E_a)<\delta  \,.
\end{equation}
in Eq.~\eqref{eq:SDP_Lovasz_sym}.
This sets all rows and columns of $M$
with $0<\wt(E_a)<\delta$ to $0$.

For any $G_{n}$, Eq.~\eqref{eq:SDP_Lovasz_sym} can be block-diagonalized
using the Terwilliger algebra from Appendix~\ref{app:inf_cert}, yielding

\begin{align}\label{eq:sdp_lovasz_relax}
&\mkern-56mu
\vartheta_{\text{sym}}(G_{n}) =  \nn\\
	\textnormal{max} \quad & \sum^n_{i=1} \gamma^{i,i}_{i,i}x^{i,i}_{i,i}
    \nn \\
	\textnormal{s.t.} 	\quad
	&  x^{0,0}_{0,0}=1\,, \nn\\
    & x^{0,0}_{i,0}=x^{i,i}_{i,i}, \quad \text{for} \quad i=1,\dots, n\,, \nn \\
    &x^{t,p}_{i,j} = 0\quad\quad\quad \text{if}\quad  t-p \quad  \text{is odd}\,,  \nn \\
    &\!\!\!\!\!\!\!\!\!\!\!\!\!
    \bigoplus_{\substack{a,k \in \N_0\\ 0\leq a\leq k\leq n+a-k}} \left(\sum\limits_{\substack{t,p\in \N_0 \\ 0 \leq p \leq t \leq i,j \\ i+j\leq t+n}}\alpha(i,j,t,p,a,k)x_{i,j}^{t,p}\right)_{i,j=k}^{n+a-k}   \!\!\!\!\!\!\!\!\succeq 0\,.
\end{align}

For $\vartheta_{\text{sym}}(G_{n, \delta})$ one adds in Eq.~\eqref{eq:sdp_lovasz_relax}
the constraint,
\begin{equation}
x^{t,p}_{i,j} = 0\quad \text{if}\quad  0<i < \delta \quad \text{or} \quad 0<j <\delta\,.
\end{equation}

Numerically we obtain $\vartheta_{\text{sym}}(G_{7,4}) \approx 126.8876$.

\subsection{Infeasibility certificate}
\label{app:inf_cert_Lovasz}
We also provide an infeasibility certificate for
Observation~\ref{obs:lovasz_infeasible}.
The feasibility problem reads as follows:

\bigskip
\noindent\underline{primal:}
\begin{align}\label{eq:SDP_Lovasz_sym_feas}
\text{find} \quad   & M_{aa}\nn\\
\text{s.t.} \quad
& M_{aa} = x_a \quad \quad \forall a\in G_{7}\,,\nn\\
& M_{ab} = 0 \quad \quad\,\,\,\, \text{if}
\quad a\sim b\,, \nn\\
&
\frac{1}{|\Pi|}
\sum_{\pi \in \Pi} \pi
\begin{pmatrix}
1 & x^T \\
x & M
\end{pmatrix}
\pi^{-1}
\succeq 0\,,\nn\\
&1 + \sum_{a\in G_{7}} M_{aa} = 2^7 \,, \nn\\
& M_{aa} = 0 \quad \quad\,\,\,\, \text{if}\quad
0<\wt(E_a) < 4
\,.
\end{align}
Here, the last constraint effectively
restricts to the truncated
Pauli anti-commutativity graph $G_{7,4}$.

The dual of Eq.~\eqref{eq:SDP_Lovasz_sym_feas} reads:

\noindent\underline{dual:}
\begin{align}\label{eq:SDP_Lovasz_sym_feas_dual}
    &\mkern-68mu \beta^* \,  =    \nn\\
    \max_{Y^{(a,k)}\,,\, {w}} \quad
    &   (2^n-1)w - y_{0,0}^{0,0} \nn \\
    \text{s.t.}  \quad & Y^{(a,k)} \succeq 0 \,,\nn \\
    & y_{i,i}^{i,i} + \gamma_{i,0}^{0,0} w + 2 y_{i,0}^{0,0} =0
    \quad\quad\text{if} \quad \delta \leq i \leq n \,, \nn \\
    &  y^{t,p}_{i,j}=0
    \hspace{8.8em}  \text{if} \quad i,j\neq 0\,\, \text{and} \nn\\ \quad &\hspace{4.7cm} t-p \quad  \text{is even,}
   \end{align}
where
\begin{align}
    y^{t,p}_{i,j} = \frac{1}{\gamma^{t,p}_{i,j}}\sum^{\min(i,j)}_{k=0} \sum^k_{\substack{a=\max(i,j)\\\,+k-n}}  \alpha(i,j,t,p,a,k) Y^{(a,k)}_{i-k,j-k}\,.
\end{align}
This can also be seen by modifying the SDP from~\cite[Proposition 26]{munne2024sdpboundsquantumcodes}.
The dual certificate proves that the primal SDP~\eqref{eq:SDP_Lovasz_sym_feas} is infeasible.
It reads:
\begin{widetext}
\begin{center}
\scriptsize
\renewcommand{\arraystretch}{1.25}      
\setlength{\arraycolsep}{2pt}
\begin{align}
\nn Y^{(0,0)} = &\begin{bmatrix}
48397z + 2191397 & 0 & 0 & 0 & -1710z - 158178 & -158290z - 5143 & -5130z - 474153 & -475506z - 15752 \\
0 & -15z + 768721 & 0 & 0 & 0 & 0 & 0 & 0 \\
0 & 0 & -2z + 54796 & 0 & 0 & 0 & 0 & 0 \\
0 & 0 & 0 & -2z + 18090 & 0 & 0 & 0 & 0 \\
-1710z - 158178 & 0 & 0 & 0 & 2z + 11907 & 19359 & 33689 & \frac{412150}{7} \\
-158290z - 5143 & 0 & 0 & 0 & 19359 & z + 36326 & 58411 & \frac{3391382}{35} \\
-5130z - 474153 & 0 & 0 & 0 & 33689 & 58411 & -z + 107704 & 176386 \\
-475506z - 15752 & 0 & 0 & 0 & \frac{412150}{7} & \frac{3391382}{35} & 176386 & -129z + 353284
\end{bmatrix}, \allowdisplaybreaks\\
\nn Y^{(0,1)} =&\begin{bmatrix}
-630z + 32286282 & 0 & 0 & 0 & 0 & 0 \\
0 & -\frac{191}{15}z + \frac{165037913}{150} & 0 & 0 & 0 & 0 \\
0 & 0 & 7z + 131987 & 0 & 0 & 0 \\
0 & 0 & 0 & 15z + 10634 & 13396 & -\frac{433017}{16} \\
0 & 0 & 0 & 13396 & 13z + 18704 & -22040 \\
0 & 0 & 0 & -\frac{433017}{16} & -22040 & -z + 168104
\end{bmatrix} , \allowdisplaybreaks\\
 \nn Y^{(0,2)} =& \begin{bmatrix}
\frac{496}{3}z + \frac{157468426}{15} & 0 & 0 & 0 \\
0 & \frac{629}{8}z + \frac{11474149}{12} & 0 & 0 \\
0 & 0 & \frac{23189}{9144}z + \frac{1399285399}{9144} & -\frac{4543859}{72} \\
0 & 0 & -\frac{4543859}{72} & -34z + 349533
\end{bmatrix}, \allowdisplaybreaks\\
\nn Y^{(0,3)} =& \begin{bmatrix}
-\frac{73}{4}z + \frac{8620129}{2} & 0 \\
0 & -\frac{2266759}{6096}z + \frac{795340555}{6096}
\end{bmatrix}, \allowdisplaybreaks\\
\nn Y^{(1,1)} =& \begin{bmatrix}
21z + 24281583 & 0 & 0 & 0 & 0 & 0 & 0 \\
0 & 26z + 968044 & 0 & 0 & 0 & 0 & 0 \\
0 & 0 & 3z + 211046 & 0 & 0 & 0 & 0 \\
0 & 0 & 0 & 11z + 16138 & -2996 & -\frac{3226851}{160} & \frac{133145}{6} \\
0 & 0 & 0 & -2996 & 27z + 4497 & 11978 & 426 \\
0 & 0 & 0 & -\frac{3226851}{160} & 11978 & 47z + 56335 & 25927 \\
0 & 0 & 0 & \frac{133145}{6} & 426 & 25927 & -160z + 180769
\end{bmatrix}, \allowdisplaybreaks\\
\nn Y^{(1,2)} =&\begin{bmatrix}
-\frac{485}{3}z + \frac{216338887}{6} & 0 & 0 & 0 & 0 \\
0 & -\frac{6259}{32}z + \frac{85570187}{16} & 0 & 0 & 0 \\
0 & 0 & 83z + 182363 & -\frac{18256657}{144} & 212274 \\
0 & 0 & -\frac{18256657}{144} & 15z + 548635 & -\frac{297369}{8} \\
0 & 0 & 212274 & -\frac{297369}{8} & -391z + 833354
\end{bmatrix} , \allowdisplaybreaks\\
\nn Y^{(1,3)} =&\begin{bmatrix}
-\frac{1539}{4}z + \frac{66843657}{2} & 0 & 0 \\
0 & -\frac{320819}{2032}z + \frac{1260614843}{2032} & \frac{51181307}{48} \\
0 & \frac{51181307}{48} & \frac{84357}{1016}z + \frac{1876929723}{1016}
\end{bmatrix} ,\allowdisplaybreaks\\
\nn Y^{(1,4)} =& \begin{bmatrix}
-\frac{765469}{2032}z + \frac{5293363377}{2032}
\end{bmatrix}, \allowdisplaybreaks\\
\nn Y^{(2,2)} =& \begin{bmatrix}
-250z + 15129991 & 0 & 0 & 0 & 0 & 0 \\
0 & 20z + 1674750 & 0 & 0 & 0 & 0 \\
0 & 0 & 20z + 114337 & 58991 & \frac{15671193}{200} & -242540 \\
0 & 0 & 58991 & 48z + 193392 & -175032 & -14153 \\
0 & 0 & \frac{15671193}{200} & -175032 & 169z + 460597 & -16259 \\
0 & 0 & -242540 & -14153 & -16259 & 336z + 1342446
\end{bmatrix},
\allowdisplaybreaks\\
\nn Y^{(2,3)} =& \begin{bmatrix}
-\frac{941}{2}z + 36028913 & 0 & 0 & 0 \\
0 & -\frac{21665}{381}z + \frac{1242931567}{381} & \frac{9746170}{27} & -\frac{23851269}{10} \\
0 & \frac{9746170}{27} & \frac{180493}{381}z + \frac{425682757}{762} & \frac{3750701}{4} \\
0 & -\frac{23851269}{10} & \frac{3750701}{4} & -\frac{103225}{254}z + \frac{579195239}{127}
\end{bmatrix} ,
\allowdisplaybreaks\\
\nn Y^{(2,4)} =& \begin{bmatrix}
-\frac{289023}{1016}z + \frac{3810527075}{1016} & -\frac{232393223}{36} \\
-\frac{232393223}{36} & \frac{223307}{2032}z + \frac{22599297133}{2032}
\end{bmatrix} , \allowdisplaybreaks\\
\nn Y^{(3,3)} =& \begin{bmatrix}
-1241z + 7876176 & 0 & 0 & 0 & 0 \\
0 & -706z + 621766 & -519329 & \frac{56119}{8} & 1636725 \\
0 & -519329 & -99z + 974761 & 625376 & -846248 \\
0 & \frac{56119}{8} & 625376 & 214z + 746408 & 633815 \\
0 & 1636725 & -846248 & 633815 & 3070z + 4832675
\end{bmatrix},
\allowdisplaybreaks\\
\nn Y^{(3,4)} =& \begin{bmatrix}
\frac{95475}{127}z + \frac{2397769627}{127} & \frac{77846069}{12} & 11253942 \\
\frac{77846069}{12} & \frac{754097}{2032}z + \frac{23060948773}{2032} & -\frac{52184735}{8} \\
11253942 & -\frac{52184735}{8} & -236z + 18560512
\end{bmatrix},\allowdisplaybreaks\\
\nn Y^{(3,5)} =&\begin{bmatrix}
\frac{771565}{2032}z + \frac{10555911}{2032}
\end{bmatrix} ,
\allowdisplaybreaks\\
\nn Y^{(4,4)} =& \begin{bmatrix}
-\frac{77753}{127}z + \frac{1055458113}{127} & -\frac{6765269}{3} & -3287901 & -\frac{19878200}{3} \\
-\frac{6765269}{3} & -\frac{775491}{1016}z + \frac{1317891421}{1016} & -\frac{16511057}{24} & \frac{7695517}{2} \\
-3287901 & -\frac{16511057}{24} & -\frac{684341}{762}z + \frac{7589200553}{1524} & -\frac{8530795}{4} \\
-\frac{19878200}{3} & \frac{7695517}{2} & -\frac{8530795}{4} & \frac{1187665}{254}z + \frac{11625442815}{1016}
\end{bmatrix}, \allowdisplaybreaks\\
\nn Y^{(4,5)} =& \begin{bmatrix}
\frac{8451}{254}z + \frac{3892126049}{127} & \frac{140268379}{8} \\
\frac{140268379}{8} & \frac{160819}{1016}z + \frac{5103228305}{508}
\end{bmatrix},
\allowdisplaybreaks\\
\nn Y^{(5,5)} =& \begin{bmatrix}
-\frac{1496877}{1016}z + \frac{8444637027}{1016} & -\frac{77841717}{16} & -\frac{83654767}{10} \\
-\frac{77841717}{16} & -\frac{2951731}{2032}z + \frac{5812362941}{2032} & \frac{39228835}{8} \\
-\frac{83654767}{10} & \frac{39228835}{8} & \frac{1252863}{254}z + \frac{4280039289}{508}
\end{bmatrix}, \allowdisplaybreaks\\
\nn Y^{(5,6)} =& \begin{bmatrix}
-\frac{51637}{508}z + \frac{26226758375}{1016}
\end{bmatrix},
\allowdisplaybreaks\\
\nn Y^{(6,6)} =&\begin{bmatrix}
-\frac{459003}{508}z + \frac{527591631}{127} & -\frac{14506683}{2} \\
-\frac{14506683}{2} & \frac{238700}{127}z + \frac{25842476277}{2032}
\end{bmatrix} ,
\allowdisplaybreaks\\
 Y^{(7,7)} =& \begin{bmatrix}
\frac{328154}{127}z + \frac{1007657}{2032}
\end{bmatrix}\,.
\end{align}
\end{center}
\end{widetext}
\twocolumngrid

\subsection{$\vartheta_{\text{sym}} = \vartheta$
for the Pauli graph}
\label{app:exact}

\begin{proposition}
For $G_{n,\delta}$ the anti-commutativity graph of $n$-qubit Pauli strings
containing only vertices
$a$ with $\wt(E_a) \geq \delta$,
\begin{equation}
 \vartheta_{\text{sym}}\big(G_{n,\delta}\big) = \vartheta\big(G_{n,\delta}\big) \,.
\end{equation}

\end{proposition}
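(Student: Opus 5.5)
The plan is to prove the two inequalities separately. Both follow from one fact: every permutation $\pi\in\Pi$ is an automorphism of the whole SDP data. Throughout I take $\Pi$ to be the group $S_3\wr S_n$ acting on $\mathcal{E}_n$ as in Appendix~\ref{app:inf_cert}. It permutes tensor factors and independently permutes $\{X,Y,Z\}$ at each site. I extend it to the index set $\{0\}\cup G_{n,\delta}$ by fixing the index $0$, which labels $\one$.

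\emph{Step 1: $\Pi$ preserves the data.} I would check three properties.
\begin{itemize}
\item[(a)] Each $\pi\in\Pi$ preserves weights, so it maps the vertex set of $G_{n,\delta}$ to itself and fixes $E_0=\one$.
\item[(b)] Each $\pi\in\Pi$ preserves anti-commutativity, i.e.\ it is a graph automorphism of $G_{n,\delta}$.
\item[(c)] The objective $\sum_a M_{aa}$ and the constraint $M_{aa}=x_a$ are invariant under relabeling.
\end{itemize}
For (b), note that two Pauli strings anti-commute iff the number of sites where both are non-identity and differ is odd. That count is unchanged by permuting sites and by applying a bijection of $\{X,Y,Z\}$ at each site. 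Equivalently, any permutation of $\{X,Y,Z\}$ is realized up to phases by a single-qubit Clifford, and Cliffords preserve the commutation relations.

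\emph{Step 2: $\vartheta_{\text{sym}}\ge\vartheta$.} Take a feasible point $Z=\begin{pmatrix}1&x^T\\x&M\end{pmatrix}\succeq 0$ of the Lovász SDP~\eqref{eq:lovasz} for $G_{n,\delta}$. Each term $\pi Z\pi^{-1}$ is a permutation conjugate of a PSD matrix and hence PSD. The average of PSD matrices is PSD. Therefore $Z$ satisfies the constraints of~\eqref{eq:SDP_Lovasz_sym}, with the same objective value. This direction is the statement that $\vartheta_{\text{sym}}$ is a relaxation.

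\emph{Step 3: $\vartheta_{\text{sym}}\le\vartheta$.} Take a feasible $Z$ for~\eqref{eq:SDP_Lovasz_sym} and set $\tilde Z=\frac{1}{|\Pi|}\sum_{\pi}\pi Z\pi^{-1}$. By assumption $\tilde Z\succeq0$. I then verify that $\tilde Z$ is feasible for the original Lovász SDP.
\begin{itemize}
\item The corner entry is $\tilde Z_{00}=1$, since $0$ is fixed by every $\pi$.
\item For the first row, $\tilde Z_{0a}$ is the average of $x_{\pi^{-1}(a)}$. This equals the average of $M_{\pi^{-1}(a)\pi^{-1}(a)}$, which is $\tilde Z_{aa}$.
\item For an edge $a\sim b$, each pair $(\pi^{-1}(a),\pi^{-1}(b))$ is again an edge by Step 1(b). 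So every averaged term vanishes and $\tilde Z_{ab}=0$.
\item Vertices with $0<\wt(E_a)<\delta$ stay absent or zero by Step 1(a).
\item The trace objective is invariant, $\sum_a\tilde Z_{aa}=\sum_a M_{aa}$.
\end{itemize}
Hence the sym optimum is attained by a feasible point of~\eqref{eq:lovasz}, which gives $\vartheta\ge\vartheta_{\text{sym}}$.

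\emph{Main obstacle.} The delicate point is the identification of the group. The paper defines $\Pi$ loosely as "permutations leaving $(i,j,t,p)$ invariant", and this must coincide with (or be replaceable by) the group $S_3\wr S_n$. Two things need to hold. First, averaging over this group yields exactly the Terwilliger-algebra matrix whose entries depend only on $(i,j,t,p)$, so that the block-diagonalized program~\eqref{eq:sdp_lovasz_relax} computes the same $\vartheta_{\text{sym}}$. Second, the action must be by genuine automorphisms, which the argument in Step 3 requires. I would settle the first point by showing that $S_3\wr S_n$ acts transitively on pairs $(E_a,E_b)$ with given $(i,j,t,p)$. This is the standard orbit description behind the Terwilliger algebra of the quaternary Hamming scheme~\cite{GIJSWIJT20061719}, and it makes the group average equal to the entrywise average in~\eqref{eq:lambda_anycode}.
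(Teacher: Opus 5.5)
Your proposal is correct and follows the same route as the paper. The paper's proof likewise observes that permuting tensor factors and permuting $\{X,Y,Z\}$ locally are automorphisms of $G_{n,\delta}$. Hence Lovász-feasible points remain feasible under the averaging, and symmetrizing any feasible point of~\eqref{eq:SDP_Lovasz_sym} gives a feasible point of~\eqref{eq:lovasz} with unchanged objective. Your ``main obstacle'' is not needed for the proposition itself. Any permutation preserving $(i,j,t,p)$ automatically preserves weights and the parity of $t-p$, that is, anti-commutativity, and that is all your Step~3 uses.
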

\begin{proof}
    Note that $G_{n,\delta}$
    is invariant under the following operations:
    \begin{enumerate}
    \item The permutation of tensor factors.
    \item
    The permutation of the three non-identity
    Pauli operators
    at each tensor factor independently.
    \end{enumerate}
    One sees that under this operation,
    any feasible point of SDP~\eqref{eq:lovasz} remains feasible, with the objective value $\sum_{a\in G_{n,\delta}} M_{aa}$ for $\vartheta$
    unchanged.
    Vice versa, after symmetrizing any feasible point of the
    SDP~\eqref{eq:SDP_Lovasz_sym},
    \begin{equation}
        \frac{1}{|\Pi|}
\sum_{\pi \in \Pi} \pi
\begin{pmatrix}
1 & x^T \\
x & M
\end{pmatrix}\pi^{-1}\,,
    \end{equation}
    one gets a feasible point for the non-symmetrized SDP~\eqref{eq:lovasz},
    also with its objective value unchanged.
    This ends the proof.
    \end{proof}

    Due to similar reasons, namely the fact that the state $\eEight$ is invariant under permutation of the seven tensor factor pairs $(A_iB_i)$ and under independently permuting the non-identity Pauli matrices of each pair,
    also the relaxation in Eq.~\eqref{eq:sdpx_relax} is exact.

\section{Relation to {\it SDP bounds on quantum codes}}
\label{app:relation}
Here we explain the relation of this work to the
article by the second and last author
{\it SDP bounds on quantum codes}~\cite{
munne2024sdpboundsquantumcodes} and the follow up~\cite{munne2026sdpboundsquantumcodesrational}.
Indeed, all the SDP formulations in this work look deceptively similar to those in Refs.~\cite{munne2024sdpboundsquantumcodes,munne2026sdpboundsquantumcodesrational}.

Let us sketch the relation:
First, on a conceptual level,
\cite{munne2024sdpboundsquantumcodes} is not concerned with entanglement detection, and in fact it is not clear how the machinery would apply to such scenario.
It is exactly the aim of this present paper is to explain how this machinery can be made to apply to entanglement detection,
by establishing a link with the symmetric extension hierarchy and entanglement criteria arising from
the Lovász theta number.

Second, we chose for simplicity in presentation
to omit certain natural constraints that appear in Ref.~\cite{munne2024sdpboundsquantumcodes} and which indeed also would apply to the SDPs here.
In turn, we also {\it added} a constraint which arises from knowing all individual correlations $\av{E_a}\av{E_a}$
of $\eEight$, and not just the value of their sum.
This slightly changes all primal and dual SDPs in this work [Eqs.~\eqref{obs:SDP}, \eqref{eq:SDP_Lovasz_feas}, \eqref{eq:sdpx_relax}, \eqref{eq:dual_sol}, \eqref{eq:SDP_Lovasz_sym_feas}, \eqref{eq:SDP_Lovasz_sym_feas_dual}], compared to~\cite{munne2024sdpboundsquantumcodes}.
Somewhat surprisingly, infeasibility can still be established with these variants.

\bigskip
Let us explain this in more detail. From Eq.~\eqref{eq:EntriesVector} it follows that
 \begin{equation}
     \tr\big((E_{a}\ot E_{a}^{\dag})\Phi_{E8}\big) = 0 \quad\quad \text{if} \quad 0<\wt(E_{a})<\delta\,.
 \end{equation}
 If the state $\Phi_{E8}$ is separable, then by Eq.~\eqref{eq:Gamma_correl},
 \begin{equation} \sum_{i}p_{i}|\tr(E_{i}\mu_{i})|^{2} = 0 \quad   \text{if}  \quad 0<\wt(E_{a})<\delta\,.
 \end{equation}
 It follows that for every $\mu_i$,
 \begin{equation}
 \label{eq:muExpectations}
     \tr(E_{a}\mu_{i}) = 0\quad\quad \text{if} \quad 0<\wt(E_{a})<\delta\,.
 \end{equation}
As a consequence,
\begin{equation}
 \Gamma_{ab} = 0 \quad\quad \text{if}\quad    0<\wt(E_{a}^{\dag}E_{b})<\delta\,.
\end{equation}

We obtain the following additional constraint on the entries of the real part of the moment matrix

\begin{align}
    \label{eq:SDPconstraint}
   \nn \bGamma_{ab} = 0 \quad \text{if}& \quad E_{a}^{\dag}E_{b} = - E_{b}E_{a}^{\dag} \quad \text{or}\, \\
  \nn  &  \quad 0<\wt(E_{a})<\delta \quad \text{or}\quad 0< \wt(E_{b})<\delta  \\
    & \quad \text{or} \quad 0<\wt(E_{a}^{\dag}E_{b})< \delta \,,
\end{align}
where in the present formulation $\delta = 4$. Notice that the constraint from Eq.~\eqref{eq:SDPconstraint} directly corresponds to the fourth condition from~\cite[Eq.~(145)]{munne2024sdpboundsquantumcodes}, where the relationship was encoded in the edges of the confusability graph.

Considering the additional constraint, one needs to modifies also the symmetry reduced  SDP from Eq.~\eqref{eq:sdpx_relax}. Instead of the condition $x_{i,j}^{t,p}=0$ of $t-p$ is odd, one gets $x_{i,j}^{t,p}=0$ of $t-p$ is odd, or $0 < i <\delta $, or $0<j <\delta$, or $0<i+j-t-p<\delta$. This restriction allows one to connect the present formulation with the one from~\cite[Eq.(146)]{munne2024sdpboundsquantumcodes}. Analogously, also the constraint in the dual program in Eq.~\eqref{eq:DualSDPdef} changes from $y_{i,j}^{t,p}=0$ if $i,j\neq 0$ and $t-p$ is even, to $y_{i,j}^{t,p}=0$ if $i,j\neq 0$ and $t-p$ is even and $i+j-t-p \geq \delta$.

We also note that in Ref.~\cite{munne2024sdpboundsquantumcodes},
the notation $i\sim j$ refers to an augmented graph which includes distance relations, whereas in the present work we reserve this notation exclusively to denote anti-commutativity of the corresponding vertex operators.

\bibliography{current_bib}
\end{document}